\documentclass[12pt]{article}
\usepackage{amsmath}
\usepackage{graphicx,psfrag,epsf}
\usepackage{enumerate}
\usepackage{natbib}
\usepackage{url} 
\usepackage{mathtools}

\newtagform{brackets}{[}{]}
\usetagform{brackets}

\usepackage{setspace}
 \DeclareGraphicsExtensions{.eps, .ps}
\usepackage{soul,color}
\usepackage{amsfonts}
\usepackage{amssymb}
\usepackage{newfloat}
\usepackage{multirow}
\usepackage{arydshln}
\usepackage{comment}
\usepackage{enumitem}
\usepackage{amsthm}
\usepackage[title]{appendix}
\usepackage{lineno}
\usepackage{booktabs}
\usepackage{tikz}
\usepackage{comment}
\usepackage{xr}
\usetikzlibrary{positioning}
\usepackage{subcaption}
\usepackage{authblk}
\usepackage{adjustbox}

\usepackage{xcolor}
\usepackage{hyperref}

\newcommand{\blind}{1}

\usepackage[margin=1in]{geometry}

\newtheorem{theorem}{Theorem}[section]
\newtheorem{proposition}{Proposition}[section] 
\newtheorem{example}{Example}[section] 

\newtheorem{lemma}[theorem]{Lemma}

\newcommand{\indep}{\perp \!\!\! \perp}

\def\E{\mathbb{E}}

\def\P{\mathbb{P}}

\let\E\relax
\DeclareMathOperator\E{\mathrm{E}} 
\DeclareMathOperator{\Var}{\mathrm{Var}}
\DeclareMathOperator{\Cov}{\mathrm{Cov}}
\DeclareMathOperator{\Cor}{\mathrm{Cor}}

\begin{document}

\def\spacingset#1{\renewcommand{\baselinestretch}%
{#1}\small\normalsize} \spacingset{1.5}


\if1\blind
{
  \title{\bf Heritability: a counterfactual perspective}
  \author[a]{Haochen Lei}
    \author[b]{Jieru Shi}
    \author[a]{Hongyuan Cao}
    \author[c]{Qingyuan Zhao\thanks{Corresponding author. Address for
        correspondence: Centre for Mathematical Sciences, Wilberforce
        Road, Cambridge CB3 0WB, United Kingdom. E-mail address:
        qyzhao@statslab.cam.ac.uk.}}
\affil[a]{Department of Statistics, Florida State University}
\affil[b]{Department of Statistical Science, University College
  London}
\affil[c]{Statistical Laboratory, University of
  Cambridge}
  \maketitle
} \fi

\if0\blind
{
  \bigskip
  \bigskip
  \bigskip
  \begin{center}
    {\LARGE\bf Heritability: a counterfactual perspective}
\end{center}
  \medskip
} \fi

\newcommand{\tval}{0} 

\newcommand{\Mysubsection}[1]{%
        \ifnum\tval=0
        \subsection{#1}%
        \else
        \subsection*{#1}%
        \fi
}

\begin{abstract}
Heritability is a central concept in the long-standing debate about
nature versus nurture in biological and social sciences. However, existing
notions of heritability are based on strong assumptions and do not
use explicit causal models. We propose a new, counterfactual
definition of heritability by adopting the potential
outcomes model in causal inference. Our counterfactual heritability
measures the importance of genetic inheritance by the average
magnitude of difference between an individual with their hypothetical
``non-identical twin'' that is exposed to the exact same environment. We
provide bounds on the counterfactual heritability that can, in principle,
be computed from observational data. We then compare counterfactual
heritability and its associated bounds with common notions of
heritability in population-based studies, twin and sibling studies,
and plant breeding experiments. Our results and comparisons highlight
the importance of clarifying the causal structural assumptions and
counterfactual comparisons in reasoning about heritability.
\end{abstract}

\noindent%
{\it Keywords:}  Causality, $G \times E$ interaction, Missing heritability, Partial identification, Twin studies
\section{Introduction}
Nature versus nurture is a long-standing debate in
biological and social sciences. Central to this discussion is the
concept of heritability, which, roughly speaking, measures the
proportion of phenotypic variation due to genetic variation.
There are several related yet distinct notions of heritability
\citep{jinks1970comparison,jacquard1983heritability,hartl1997principles,zaitlen2012heritability,barry_how_2023}. The
most commonly used notions are the \emph{broad-sense heritability} $H^2$,
which refers to the proportion of phenotypic variance that is
attributable to total genetic variation; the \emph{narrow-sense
        heritability} $h^2$, which refers to the proportion attributable to
additive genetic effects in a statistical genetics model; and the
\emph{SNP-based heritability} $h_{\text{SNP}}^2$, which refers to narrow-sense
heritability based on single-nucleotide polymorphisms (SNP) only. In twin
studies, heritability is typically measured by Falconer's formula
(twice the difference of the correlation between identical twins and
the correlation between non-identical twins). Closely related are
heritability estimates from
sibling regression using the identity-by-descent (IBD) kinship matrix
\citep{visscher2006assumption}. Heritability is also used by plant and
animal breeders to measure the expected response to selection of a
trait. There, heritability is usually defined on an entry-mean basis
(averaging over experimental replications)
\citep{piepho2007computing,schmidt2019heritability}. 

Although existing notions of genetic heritability can shed light on the mechanisms that influence human physiology and behavior, they have two serious drawbacks.
First and foremost, it has been observed since the beginning of the
GWAS era that, for many human traits, there is a large gap between
heritability estimates from twin studies and from unrelated
individuals \citep{manolio2009finding,yang_common_2010}. Many theories
and explanations for this ``missing heritability problem'' have been
proposed \citep{zuk2012mystery,boyle2017expanded,young2019solving},
and this debate has also motivated a flurry of new methods for heritability estimation
\citep[e.g.][]{yang_common_2010,bulik-sullivan_ld_2015,yang_concepts_2017,young2018relatedness}. One
contributing factor to the ``missing heritability problem'' is that
these heritability estimates use different methods and may simply
approximate different population quantities. For example,
the twin-based estimate from the ACE model (see Example \ref{ex:twin}
below) is usually believed to approximate the narrow-sense
heritability $h^2$, and the GWAS-based estimates are thought to
approximate the SNP-based heritability $h_{\text{SNP}}^2$ (this is
basically how $h_{\text{SNP}}^2$ is ``defined''). However, it is not
easy to communicate the subtle differences between different concepts
of heritability to applied scientists and to the public. Fundamentally,
the difficulty is that these notions of heritability are based on
specific statistical models that make different assumptions for
different designs. This means that they are largely ill-defined when
the modeling assumptions are not satisfied. 

This leads us to the second drawback of existing approaches:
although heritability is usually described as the fraction of
phenotypic variation \emph{attributable} to genetic variation and thus
is an inherently causal concept, most existing notions
of heritability only model statistical relations (e.g., associations
and conditional associations) instead of causal relations. A notable
exception is the ACE model for twin studies
\citep{falconer1996introduction,kohler2011social}. This is a special case of linear structural
equation models that were originally introduced by
\citet{wright1920relative} and are routinely used in behavioural
genetics \citep{neale2016openmx} to estimate heritability. However, the ACE model still
suffers from a lack of causal consideration: the genetic and
environmental effects, which are assumed to be independent in the ACE
model, are generally confounded due to parental genetic effects. This
observation has recently regenerated interests in using
family-based designs in human genetics, but mainly for giving causal
interpretations to GWAS results
\citep{kong2018nature,young2022mendelian,veller2024interpreting,young2024genome}
or reducing the bias of Mendelian randomization studies
\citep{brumpton2020avoiding,tudball2022almost}. In sum, although
heritability is often thought as a form of causal attribution, the
causal perspective is largely missing in the mathematical definitions
and models used in the literature.

In this article, we fill this gap with a new, counterfactual definition of
heritability. The key idea is to measure heritability by the average
magnitude of the squared difference between an
individual and its counterfactual---a \emph{hypothetical}
``non-identical twin'' that goes through exactly the same
environment. The precise construction of this counterfactual will
depend on the specific context and more details can be found below. We
rigorously define this natural
conception of heritability using the Neyman-Rubin potential outcomes
model \citep{imbens2015causal} and the nonparametric structural
equation model \citep{pearl2009,richardson2013single} that generalizes
Wright's linear structural equation model. Because our definition of
heritability is inherently counterfactual---belonging to the third layer of Pearl's
ladder of causality
\citep{pearl2018book,bareinboimPearlHierarchyFoundations2022}, it is
generally only partially identifiable from empirical data. To this
end, we provide lower and upper bounds of counterfactual
heritability that can be computed from observational data by estimating the conditional distribution of the phenotype given genetic
and environmental factors. We further compare counterfactual
heritability and its bounds with several existing notions of
heritability.
\section{Counterfactual Heritability}
\subsection{Definition of Counterfactual Heritability}

Let the random variables $G \in \mathcal{G} \subseteq \mathbb{R}^d$
and $Y \in \mathcal{Y} \subseteq \mathbb{R}$ denote, respectively, the
genotype and phenotype of an individual. Suppose one can construct, at
least hypothetically in a thought experiment, the \emph{potential
        outcomes} $Y(g)$ in an experiment that sets $G$ to a fixed value $g
\in \mathcal{G}$. For example, if $Y$ represents the weight of a
potato, then $Y(g)$ is the weight of the potato when the genotype is
set as $g$ by CRISPR gene editing.

Our definition of counterfactual heritability rests on comparing the potential outcomes of two ``non-identical twins'' under
exactly the same environment. More specifically, consider the
following setting:
\begin{equation}
        G \indep G'\indep \{Y(g): g \in \mathcal{G}\}\quad \mbox{and}\quad G'\overset{d}=G. \tag{C1} \label{orth-uncond}
\end{equation}
Here we assume $G$ and $G'$ are two independent and identically
distributed (IID) genotypes that are independent of the potential
outcomes. In this case, we define \emph{counterfactual heritability}
as
\begin{equation}
        \label{eq:def-h2}
        \xi:=\frac{\Var(Y(G)-Y(G'))}{2\Var(Y(G))},
\end{equation}
where $\xi$ means ``explanability''. That is, counterfactual
heritability is the average magnitude of the squared difference between the
potential outcomes of two ``non-identical twins'' (with genotype $G$
and $G'$), normalized by dividing twice the phenotypic variance.

To make this definition more concrete, suppose the potential outcomes
are determined by the structural equation 
\[
Y(g) = f(g, E),~g \in \mathcal{G},
\]
where $f$ is some unknown function and $E$ contains all environmental
factors (possibly unmeasured and including random variations ariging
in the organism itself). Condition [\ref{orth-uncond}]
essentially assumes $G$, $G'$ and $E$ are independent, and the
counterfactual heritability in [\ref{eq:def-h2}] can be written as
\[
\xi = \frac{\Var(f(G,E) - f(G',E))}{2 \Var(f(G,E))}.
\]
It is clear from this expression that the ``non-identical twins'' are
assumed to experience \emph{exactly the same environment} $E$. As an
even more concrete example, if
\[
Y(g) = \beta_g g + \beta_e E + \beta_{ge} g E,
\]
it is not difficult to verify that
\[
\xi = \frac{\beta_g^2 \Var(G) + \beta_{ge}^2 \Var(G) \Var(E)}{\beta_g^2 \Var(G) + \beta_e^2 \Var(E) + \beta_{ge}^2 \Var(G) \Var(E)}.
\]
Thus, the counterfactual heritability $\xi$ includes the variation
from gene-environment interaction measured by $\beta_{ge}^2 \Var(G)
\Var(E)$. This way of ``accounting'' is different from most existing
notions of heritability that only include the variation of the genetic
``main effects'' $\beta_g^2 \Var(G)$ in this simple model.

It is often not realistic to assume that the gene and environment are
independent factors unconditionally. Instead, there may be some
environmental factors and/or background genetic factors $X$ such that,
given $X$, the gene is \emph{unconfounded} with the rest of the
environment. Formally, this unconfoundedness assumption can be
described using the following modification to condition [\ref{orth-uncond}]:
\begin{equation}
        G \indep G'\indep \{Y(g): g \in \mathcal{G}\} \mid X \quad \mbox{and}\quad G'\overset{d}=G \mid X \tag{C2} \label{orth}
\end{equation}
The first part of [\ref{orth}] on the conditional independence of $G$
and potential outcomes of $Y$ is known as ignorability
\citep{imbens2015causal} or no unmeasured confounding
\citep{hernanCausalInferenceWhat2023} in the causal inference
literature. Assuming a nonparametric structural equation model with
independent errors, this can be read off from a causal diagram for
$G$, $Y$, and background factors \citep{pearl2009}.
We will still define the counterfactual heritability as $\xi$ in
equation [\ref{eq:def-h2}], although the distribution of $G'$ is now
different and the variance is also averaged over the distribution of
$X$. One can view [\ref{orth-uncond}] as a special case of
[\ref{orth}] that takes $X$ to be empty. All our theoretical results
below will be stated in the more general case under condition
[\ref{orth}].

Under condition [\ref{orth-uncond}] or [\ref{orth}], $\xi$ is always
bounded between $0$ and $1$. In fact, the counterfactual heritability
in [\ref{eq:def-h2}] can be written as
\begin{equation*} 
       \xi = 1 - \Cor(Y(G), Y(G')),
\end{equation*}
and it can be shown that $\Cor(Y(G), Y(G')) \geq 0$ (see Lemma \ref{lemma:decomp}). This justifies treating $\xi$ as a notion of heritability or variable importance.

Note that the definition of counterfactual heritability depends
on the distribution of $G$. We view this as a feature of this
definition because heritability is a population genetics concept and
thus should depend on the population under investigation. We will
only investigate mathematical properties of $\xi$ in this Section but
will return to its practical interpretation later.


\subsection{Identifiable bounds on counterfactual heritability}

As mentioned in the Introduction, our definition of counterfactual heritability
relies on a hypothetical thought experiment, because no experiment in
the real world can provide \emph{exactly the same
        environment} for two individuals. Thus, $\xi$
should be regarded as an idealized quantity that we should strive to
approximate as accurately as possible.

From a statistical perspective, the difficulty of approximating $\xi$
is that it is not possible to
observe $Y(G)$ and $Y(G')$ at the same time. This is commonly referred
to as the ``fundamental problem of causal inference''
\citep{holland1986statistics}. In fact, in reality we can observe just
one realization of the genotype $G$ and the phenotype $Y$ for any
given individual. This gives rise the so-called \emph{causal
        identification} problem: can we use the distribution of the observed
variables (in our case, $G$, $Y$, and possibly $X$) to determine a
counterfactual quantity (in our case, the counterfactual heritability
$\xi$)?

To make any progress, we first assume that the potential and factual
outcomes are \emph{consistent}  in the sense that $Y(G) = Y$
\citep{hernanCausalInferenceWhat2023}; this is also referred to as the
\emph{stable unit treatment value assumption} (SUTVA)
\citep{rubin1980randomization} or sometimes the \emph{no interference}
assumption in the causal inference literature. By combining this with
[\ref{orth}], it is possible to identify the distribution of $Y(g)$ or
the average treatment effect $\E(Y(g) - Y(g'))$ for any fixed $g, g'$
\citep{rosenbaum1983central}.

However, the counterfactual heritability $\xi$ in [\ref{eq:def-h2}]
clearly depends on the joint distribution of the potential outcomes,
which cannot be identified from empirical data. So exact
identification of $\xi$ is not possible in
general. Nonetheless, it is possible to provide nontrivial bounds on
$\xi$ as we demonstrate below.

Bounding $\xi$ is an instance of optimal transport
problem. Note that bounding $\xi$ is equivalent to bounding $V =
\Var(Y(G) - Y(G'))$, because the denominator in $\xi$ can be directly
identified by $2\Var(Y(G)) = 2 \Var(Y)$. When the genotype $G$ is
binary, the next result provides
tight upper and lower bounds of $V$ using the Fr\'echet-Hoeffding
inequality \citep{frechet1951tableaux,hoeffding1940masstabinvariante}.

\begin{proposition} \label{thm:binary-bound}
        Suppose $\mathcal{G} = \{0,1\}$. Under condition [\ref{orth}], we have 
        \[\xi_l \leq \xi \leq \xi_u,\]
        where
        \begin{align}
                \xi_l &= \frac{\E\left[\left\{F_{G,X}^{-1}(U) -
                        F_{G',X}^{-1}(U)\right\}^2 \right]}{2\Var(Y)}, \label{eq:xi-l}
                \\
                \xi_u &= \frac{\E\left[\left\{F_{G,X}^{-1}(U) -
                        F_{G',X}^{-1}(1 - U)\right\}^2 \right]}{2\Var(Y)}. \tag*{} 
        \end{align}
        In the equations above, $U \sim \mathrm{Unif}[0,1]$ is independent of $(X,G,G')$ and
        $F_{g,x}^{-1}(u) = \inf \{y: F_{g,x}(y) \geq u\}$ is the
        quantile function (the inverse of the conditional distribution
        function $F_{g,x}(y) = \P(Y \leq y \mid G = g, X = x)$) of $Y$ given
        $G$ and $X$. The equality $\xi = \xi_l$ holds if and only if $Y(0)$
        and $Y(1)$ are comonotonic given $X$ (i.e., their conditional rank
        correlation is 1), and
        the equality $\xi = \xi_u$ holds if and only if $Y(0)$ and $Y(1)$ are
        countermonotonic given $X$ (i.e., their conditional rank correlation
        is -1).
\end{proposition}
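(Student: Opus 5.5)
The plan is to reduce $\xi$ to a statement about the conditional joint law of $(Y(0),Y(1))$ given $X$, and then apply the Fr\'echet--Hoeffding bounds conditionally on $X$. Throughout I assume $\E\{Y(g)^2\}<\infty$.

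\textbf{Step 1 (remove the mean).} Under [\ref{orth}] we have $G'\overset{d}{=}G \mid X$ and $(G,G')\indep \{Y(g)\}\mid X$. Hence $Y(G)\overset{d}{=}Y(G')$, so $\E\{Y(G)-Y(G')\}=0$ and
\[
V=\Var\{Y(G)-Y(G')\}=\E[\{Y(G)-Y(G')\}^2].
\]
By consistency, $2\Var\{Y(G)\}=2\Var(Y)$, so it suffices to bound $V$.

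\textbf{Step 2 (condition on $X,G,G'$).} Write $p(X)=\P(G=1\mid X)$. On $\{G=G'\}$ the difference $Y(G)-Y(G')$ is identically zero. On $\{G\neq G'\}$ it equals $\pm\{Y(1)-Y(0)\}$. Since $(G,G')$ is independent of the potential outcomes given $X$,
\[
V=\E\left[2p(X)\{1-p(X)\}\,\E[\{Y(1)-Y(0)\}^2\mid X]\right].
\]
Ignorability plus consistency identify the conditional marginals: $Y(g)\mid X=x$ has distribution function $F_{g,x}$. Expanding the square gives
\[
\E[\{Y(1)-Y(0)\}^2\mid X]=\E[Y(1)^2\mid X]+\E[Y(0)^2\mid X]-2\,\E[Y(1)Y(0)\mid X].
\]
Only the cross term depends on the unidentified coupling.

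\textbf{Step 3 (Fr\'echet--Hoeffding).} For fixed marginals, Hoeffding's identity reads
\[
\Cov(Y(1),Y(0)\mid X)=\iint \{H_X(y,y')-F_{1,X}(y)F_{0,X}(y')\}\,dy\,dy',
\]
where $H_X$ is the conditional joint distribution function. It is combined with the Fr\'echet bounds
\[
\max(F_1+F_0-1,0)\le H\le \min(F_1,F_0).
\]
This shows that $\E[Y(1)Y(0)\mid X]$ is maximized by the comonotone coupling $(F_{1,X}^{-1}(U),F_{0,X}^{-1}(U))$ and minimized by the countermonotone coupling $(F_{1,X}^{-1}(U),F_{0,X}^{-1}(1-U))$. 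Moreover, equality forces $H_X$ to equal the corresponding Fr\'echet bound at Lebesgue-a.e.\ $(y,y')$, and by right-continuity at every point. This is exactly conditional comonotonicity (respectively countermonotonicity), which gives the ``if and only if'' statements.

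\textbf{Step 4 (rewrite in the $U$ form).} Substituting the extremal couplings back into Step 2 and re-inserting the mixture over $(G,G')$ with an independent $U$ gives the two expressions in the statement.

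\begin{itemize}
\item \emph{Lower bound.} The term $\{F_{G,X}^{-1}(U)-F_{G',X}^{-1}(U)\}^2$ vanishes automatically on $\{G=G'\}$. So $\xi_l$ is exactly the comonotone value, and the equality characterization transfers directly.
\item \emph{Upper bound.} On $\{G=G'\}$ the term $\{F_{g,X}^{-1}(U)-F_{g,X}^{-1}(1-U)\}^2$ is nonnegative but generally nonzero. The displayed $\xi_u$ therefore equals the countermonotone value of $V/\{2\Var(Y)\}$ plus a nonnegative extra term, which still yields $\xi\le\xi_u$.
\end{itemize}

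\textbf{Main obstacle.} I expect the delicate part to be this bookkeeping in the upper bound. The tightness claim for $\xi_u$ should be read with the $G\neq G'$ contribution, i.e.\ the sharp bound is
\[
\E\left[2p(X)\{1-p(X)\}\,\E\left[\{F_{1,X}^{-1}(U)-F_{0,X}^{-1}(1-U)\}^2 \,\middle|\, X\right]\right]\Big/\{2\Var(Y)\}.
\]
Equality in the displayed form would additionally require the $G=G'$ term to vanish. I would make this explicit, verifying carefully which version is claimed, rather than gloss over it.

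The secondary technical care is the a.e.-to-everywhere uniqueness argument in Step 3 when the $F_{g,x}$ have atoms or flat parts.
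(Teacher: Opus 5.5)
Your proposal is correct and follows essentially the same route as the paper. Both condition on $X$, reduce $V$ to the $X$-average of $2p(X)\{1-p(X)\}\,\E[\{Y(1)-Y(0)\}^2\mid X]$, and apply Hoeffding's covariance identity together with the Fr\'echet--Hoeffding bounds; your conditional $p(X)$ is actually more careful than the paper's marginal $p_0p_1$. Your caveat about $\xi_u$ is also justified: the paper's ``Similarly'' step re-mixes over $(G,G')$ as if the $G=G'$ terms vanished, which they do for $\xi_l$ but not for $\xi_u$, so the displayed $\xi_u$ is a valid but generally non-sharp bound, and the ``if and only if'' for $\xi=\xi_u$ holds only for your $G\neq G'$ version (this is also the version the paper's later three-genotype example implicitly uses to get $\xi_u=4/3$, where the displayed formula would give $2$).
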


When $G$ is not binary, $\xi_l$ is still a tight lower bound for
$\xi$, but obtaining tight upper bounds for $V$ (and thus $\xi$) is
still an open problem \citep{wang2015current}. To provide some
intuition, it can be shown (see Lemma \ref{lemma:decomp} in the Appendix) that if $X$ is empty (so
[\ref{orth}] reduces to [\ref{orth-uncond}]), we have
\begin{equation}\label{1-h2}
        1 - \xi = \frac{\Var\{ \E(Y(G) \mid Y(\cdot))\}}{\Var\{Y(G) \}},
\end{equation}
where $Y(\cdot) = \{Y(g): g \in \mathcal{G}\}$ denotes the potential
outcomes ``schedule''. Thus, if there exists a joint distribution on
$Y(\cdot)$ (with constrained marginals given by the observed data)
such that $\E(Y(G)\mid Y(\cdot))$ is a constant, the trivial upper
bound $\xi \leq 1$ is tight; see \cite{xiao2020note} for some
examples. Heuristically, if we do not have any environmental factor
and observe different phenotypes for two ``non-identical twins'', we
cannot rule out the possibility that such difference is entirely
due to the difference in their genotypes. 


The next result provides bounds in the non-binary case. All
mathematical proofs can be found in the Appendix.

\begin{theorem} \label{thm:main-bound}
        Under [\ref{orth}], 
        we have 
       \[
       \xi_l' \leq \xi_{l}\le \xi \le \min\{\xi_u, \xi_u'\},
       \]
        where
        \begin{align}
                \xi'_{l} &= \frac{\E[\Var\{\E(Y \mid G,X)\mid X\}]}{\Var(Y)}= 1 -
                \frac{\E(\Var(Y \mid G,X))+\Var(\E(Y\mid X))}{\Var(Y)}, \label{eq:xi_lower}\\
                \xi_u'&=\frac{\E(\Var(Y \mid X))}{\Var(Y)} = 1-\frac{\Var(\E(Y \mid X))}{\Var(Y)},\label{eq:xi_upper}
        \end{align}
        and the definitions of $\xi_l$ and $\xi_u$ can be found in
        Proposition
        \ref{thm:binary-bound}.
        The equality $\xi_{l} = \xi$ holds if and only if all
        potential outcomes are
        comonotonic given $X$, and the equality $\xi_l' = \xi_l$ holds if the quantile difference $F^{-1}_{G,X}(u)-F^{-1}_{G',X}(u)$ does not depend on $u$.
\end{theorem}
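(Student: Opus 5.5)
Throughout I would work conditionally on $X=x$ and average over $X$ at the end. Write $V=\Var(Y(G)-Y(G'))$. By [\ref{orth}], $E\{Y(G)-Y(G')\mid X\}=0$, so $V=\E[\E\{(Y(G)-Y(G'))^2\mid X\}]$. Conditioning further on $(G,G')=(g,g')$, independence gives $(Y(G),Y(G'))\mid\{X=x,G=g,G'=g'\}\overset{d}{=}(Y(g),Y(g'))\mid X=x$. By consistency, the marginals of this pair are $F_{g,x}$ and $F_{g',x}$. Hence
\[
V=\E\bigl[\E\{(Y(g)-Y(g'))^2\mid X\}\big|_{g=G,g'=G'}\bigr],
\]
and each inner term is a two-marginal transport cost with quadratic cost.

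\textbf{Bounds $\xi_l\le\xi\le\xi_u$.} These follow as in Proposition \ref{thm:binary-bound}, applied pairwise. For fixed marginals, $\E(A-B)^2=\E A^2+\E B^2-2\E AB$. By the Hoeffding/Fr\'echet (rearrangement) inequality, $\E AB$ is maximized by the comonotone coupling $(F_A^{-1}(U),F_B^{-1}(U))$ and minimized by the countermonotone coupling $(F_A^{-1}(U),F_B^{-1}(1-U))$. Integrating over $(X,G,G')$ then gives the two bounds.

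\textbf{Equality case $\xi=\xi_l$.} If all $Y(g)$ are comonotonic given $X$, then $Y(g)=F_{g,X}^{-1}(U)$ for a common $U$, so every pair attains the lower bound. Conversely, equality forces the comonotone coupling for almost every pair $(g,g')$ under the law of $(G,G')\mid X$. Comonotonicity of all pairs is equivalent to joint comonotonicity, which is the standard characterization. The non-binary upper bound is only $\xi\le\xi_u$, not tight, so no equality claim is needed there.

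\textbf{Bound $\xi\le\xi_u'$.} The law of total variance conditional on $X$ gives
\[
V=\E\{\Var(Y(G)-Y(G')\mid X)\}\le\E\bigl[\{\Var(Y(G)\mid X)^{1/2}+\Var(Y(G')\mid X)^{1/2}\}^2\bigr]=4\E\Var(Y\mid X).
\]
This is off by a factor of 2, so I would instead use
\[
\Var(A-B\mid X)=2\Var(Y\mid X)-2\Cov(Y(G),Y(G')\mid X).
\]
I would then show $\Cov(Y(G),Y(G')\mid X)\ge0$. Conditioning on $(X,Y(\cdot))$ makes $Y(G)$ and $Y(G')$ conditionally i.i.d. given $Y(\cdot)$, so the covariance equals $\Var\{\E(Y(G)\mid Y(\cdot),X)\mid X\}\ge0$; this is the argument of Lemma \ref{lemma:decomp}. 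Dividing by $2\Var(Y)$ gives $\xi\le\E\Var(Y\mid X)/\Var(Y)$. The second expression for $\xi_u'$ is total variance.

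\textbf{Bound $\xi_l'\le\xi_l$.} Let $\mu_{g,x}=\E(Y\mid G=g,X=x)$ and $D(U)=F_{G,X}^{-1}(U)-F_{G',X}^{-1}(U)$. By Jensen in $U$, conditional on $(X,G,G')$,
\[
\E\{D(U)^2\mid X,G,G'\}\ge\{\E(D(U)\mid X,G,G')\}^2=(\mu_{G,X}-\mu_{G',X})^2,
\]
with equality iff $D(u)$ is constant in $u$. Averaging over $G,G'$ i.i.d. given $X$ gives $2\Var(\mu_{G,X}\mid X)$. Therefore $\xi_l\ge\E\Var\{\E(Y\mid G,X)\mid X\}/\Var(Y)=\xi_l'$, and the equality condition is exactly the stated one. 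The alternative form of $\xi_l'$ is the three-term variance decomposition $\Var(Y)=\E\Var(Y\mid G,X)+\E\Var(\E(Y\mid G,X)\mid X)+\Var\E(Y\mid X)$.

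The main obstacle is the converse in the equality characterization of $\xi=\xi_l$. It requires uniqueness of the optimal quadratic coupling, or strictness of the Hoeffding inequality when the coupling is not comonotone. It also requires passing from "almost every pair" comonotone to joint comonotonicity. I would handle this by citing strict Hoeffding via the identity $\Cov=\iint\{H-F_AF_B\}$, with $H\le\min(F_A,F_B)$ and equality iff comonotone.
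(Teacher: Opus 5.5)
Your proposal is correct and follows essentially the same route as the paper. It uses pairwise Fr\'echet--Hoeffding bounds averaged over $(X,G,G')$ for $\xi_l\le\xi\le\xi_u$, the identity $\Cov(Y(G),Y(G')\mid X)=\Var\{\E(Y(G)\mid Y(\cdot),X)\mid X\}\ge 0$ for $\xi\le\xi_u'$, and conditional Jensen in $U$ for $\xi_l'\le\xi_l$. The differences are minor: you obtain $\xi_l'\le\xi$ by transitivity through $\xi_l$, whereas the paper also gives a redundant direct bound $\Cov(Y(G),Y(G')\mid X)\le\E\{\Var(Y\mid G,X)\mid X\}$; and you handle the converse of the equality case $\xi=\xi_l$ more carefully than the paper, which only asserts it.
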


Theorem \ref{thm:main-bound} provides two lower bounds and tow upper
bounds on the counterfactual heritability $\xi$ that can be computed
using a sample of $(X,G,Y)$. The lower bound $\xi_l$ is tight but
requires modeling the conditional quantile function of $Y$ given $X$
and $G$. The more relaxed lower bound $\xi_{l}'$ could be easier to
compute. On the
other side, the new upper bound
$\xi_u'$ uses the expected conditional variance of
$Y$ given the observed environmental factors. Thus, including some
environmental factors $X$ that are independent of the genes but
predictive of the phenotype can lead to non-trivial upper bounds of
the counterfactual heritability. Intuitively, the genotypes can explain at
most the variance not possibly explained by $X$, so
\[
\frac{1}{2} \Var(Y(G) - Y(G')) \leq \E(\Var(Y \mid X)).
\]
The inequality $\xi \leq \xi_u$ in Proposition \ref{thm:binary-bound}
still holds for non-binary $G$, but it is generally not very
useful. For example, when $\P(G = g) = 1/3$ and $Y(g) \sim
\text{N}(0,1)$ for $g \in \{0,1,2\}$, it is not difficult to show that
$\xi_u = 4/3 > 1$ and thus the upper bound $\xi_u$ is trivial.
\section{Connection with different notions of heritability}
\subsection{Comparison with broad-sense and narrow-sense
  heritability}
Next, we compare counterfactual heritability $\xi$ with other existing notions of heritability.
Roughly speaking, ``broad-sense heritability'', often denoted as
$H^2$, is the proportion of phenotypic variation statistically
explained by \emph{total genetic variation}, including dominance and
epistasis. On the other hand, ``narrow-sense heritability'', often
denoted as $h^2$, refers to phenotypic variation explained by \emph{additive
        genetic variation} only. These quantities are not always precisely defined in the literature (their definition usually requires a specific statistical model). Consider the example below.

\begin{example} \label{ex:one-gene}
        To illustrate the difference between $H^2$ and $h^2$, consider a simple example with just one gene. Assume that the counterfactual model for the phenotype $Y$ is
        $Y(g)=f(g)+E$, where $g\in \{AA,Aa,aa\}$ is the genotype and $E$ is a random environmental factor. Suppose
        \begin{equation} \label{eq:one-gene-f}
                f(AA)=m+a,f(Aa)=m+d,f(aa)=m-a,
        \end{equation}
        where $a$ is the additive/linear effect and $d$ is the dominance
        effect \citep{hartl1997principles}. Suppose $\P(G = AA) = p^2$, $\P(G =
        Aa) = 2pq$, and $\P(G = aa) = q^2$ for some $p + q = 1$, and $E$ is
        independent of $G$ with mean $0$. Then the variance of the genetic
        effect $f(G)$ can be written as
        \begin{align*} 
                \underbrace{\Var(f(G))}_{\sigma_g^2} &= \underbrace{2pq[a+(q-p)d]^2}_{\sigma_a^2}+\underbrace{(2pqd)^2}_{\sigma_d^2},
        \end{align*}
        where $\sigma_a^2$ is the additive effect, and $\sigma_d^2$ is the dominance effect.
        The least-squares projection of $f(G)$ to linear functions of $G$
        (under the coding $AA \mapsto 1$, $Aa \mapsto 0$, $aa \mapsto -1$, so
        $\E(G) = p-q$ and $\Var(G) = 2pq$) is given by $G \beta_0$ where
        $\beta_0 = a + (q-p)d$. This justifies calling the first term on the
        right hand side the variance of the additive/linear genetic
        effect because
        $\sigma_a^2 = \Var(G \beta_0)$. If we denote $\Var(E) = \sigma_e^2$ so
        $\Var(Y) = \sigma_g^2 + \sigma_e^2$, the broad-sense and
        narrow-sense heritability are defined as
        \[
        H^2 = 
        \frac{\sigma_a^2 + \sigma_d^2}{\sigma_a^2 + \sigma_d^2 + \sigma_e^2}~\text{and}~
        h^2 
        = \frac{\sigma_a^2 }{\sigma_a^2 + \sigma_d^2 + \sigma_e^2}.
        \]
        In this settings, the counterfactual heritability
        \[
        \xi = \frac{\Var(Y(G) - Y(G'))}{2 \Var(Y(G))} = \frac{\Var(f(G))}{\Var(Y)} = H^2
        \]
        is the same as the broad-sense heritability. When there are multiple causal genes, the equality $\xi = H^2$ still
holds as long as there is no gene-environment interaction.
When there is gene-environment interaction, counterfactual
heritability will include $G \times E$ interaction as part of its
numerator as discussed earlier.
\end{example}

Next, we compare $H^2$ and $h^2$ with counterfactual heritability $\xi$ and
its bounds in Theorem \ref{thm:main-bound}. This is not
straightforward because the precise definitions of $H^2$ and $h^2$ are
often tied to specific parametric models of the phenotype. To make
these quantities comparable to $\xi$ when the
parametric models are not correct, we will use the following nonparametric definition
\begin{equation}
\label{def:H2}
   H^2 = \frac{\Var(\E(Y \mid G))}{\Var(Y)}
\end{equation}
and
\begin{equation}
\label{def:h2}
   h^2 = \frac{\Var(G^T \beta_0)}{\Var(Y)},
\end{equation}
where $G^T \beta_0$ is the least-squares projection of $\E(Y \mid G)$ onto linear functions of $G$, that is, $$\beta_0 = \arg\min_{\beta \in \mathbb{R}^d} \Var(\E(Y \mid G) - G^T \beta).$$ It follows that $0 \leq h^2 \leq H^2 \leq 1$. Here we deliberately exclude any measured environmental factors $X$ even though they are often used in practical estimation methods. This is because the exact regression model often changes the definition of $H^2$ and $h^2$ and makes the comparison more difficult.

It can be shown that $H^2\le \xi_l'$ if $G \indep X$ and $H^2 = \xi_l'$ if $(Y,G) \indep X$
(see Lemma \ref{lem:xil-h2}). However,
$H^2\le \xi_l'$ is not always true; for example, when $G = X$ with
probability 1, we have $\xi_l' = 0$ but $H^2$ can still be positive.

\begin{table}[t]
        \centering
        \caption{A comparison of different notions of population heritability in different counterfactual models of a phenotype $Y$.}
        \label{tab:heritability1}
    \begin{adjustbox}{max width=\textwidth}
        \begin{tabular}{l cccccc}
                \toprule
                $Y(g)=$ & Narrow $h^2$ & Broad $H^2$ & Counterfactual $\xi$ & Lower $\xi_l'$ & Tight lower $\xi_l$ & Upper $\xi_u'$ \\
                \midrule
                $\beta_1 g_1+\beta_2g_2+E_1$ & $\frac{\beta_1^2+\beta_2^2}{\beta_1^2+\beta_2^2+1}$ & $\frac{\beta_1^2+\beta_2^2}{\beta_1^2+\beta_2^2+1}$& $\frac{\beta_1^2+\beta_2^2}{\beta_1^2+\beta_2^2+1}$&
                $\frac{\beta_1^2+\beta_2^2}{\beta_1^2+\beta_2^2+1}$&
                $\frac{\beta_1^2+\beta_2^2}{\beta_1^2+\beta_2^2+1}$&1 \\
                $\beta g_1g_2+E_1$ & $\frac{2\beta^2}{4+3\beta^2}$ & $\frac{3\beta^2}{4+3\beta^2}$ & $\frac{3\beta^2}{4+3\beta^2}$& $\frac{3\beta^2}{4+3\beta^2}$ &$\frac{3\beta^2}{4+3\beta^2}$ &$1$ \\
                $\beta g_1 E_2+E_1$ & $0$ & $0$ & $\frac{\beta^2}{4+2\beta^2}$&$0$&$\frac{(\sqrt{\beta^2+1}-1)^2}{4+2\beta^2}$&$1$ \\
                $\beta_1g_1+\beta_2 X+E_1$ & $\frac{\beta_1^2}{\beta_1^2+\beta_2^2+1}$ & $\frac{\beta_1^2}{\beta_1^2+\beta_2^2+1}$ & $\frac{\beta_1^2}{\beta_1^2+\beta_2^2+1}$ & $\frac{\beta_1^2}{\beta_1^2+\beta_2^2+1}$ & $\frac{\beta_1^2}{\beta_1^2+\beta_2^2+1}$ & $\frac{1+\beta_1^2}{\beta_1^2+\beta_2^2+1}$\\
                $\beta g_1 X+E_1$ & $0$ & $0$ & $\frac{\beta^2}{4+2\beta^2}$&$\frac{\beta^2}{4+2\beta^2}$&$\frac{\beta^2}{4+2\beta^2}$&$\frac{4+\beta^2}{4+2\beta^2}$ \\
                \bottomrule
        \end{tabular}
    \end{adjustbox}
\end{table}


Table \ref{tab:heritability1} compares counterfactual
heritability and its bounds with narrow-sense and broad-sense
heritability in a number of examples. To simplify the
expressions, we consider two independent, binary genotypes $G_1,G_2
\sim \text{Bern}(0.5)$ and some IID normally distributed observed
environmental factor $X$ and unobserved factors $E_1,E_2$ with mean
$0$ and variance $1/4$. Table \ref{tab:heritability2} further provides
the numeric values of
different notions of heritability in a variety of examples
with two causal genetic variants $G_1,G_2$ (the exact settings are
described in the Online Supplement).

It is easy to see from Table \ref{tab:heritability1} that all six
notions of heritability are different. As another instructive example,
in the two settings where $Y(g) = \beta g_1 E_2 + E_1$ and $Y(g) =
\beta g_1 X + E_1$, the counterfactual heritability $\xi$ is the same
but the interval $[\xi_l, \xi_u']$ is tighter in the latter case
because the environmental factor $X$ is known.

In Table \ref{tab:heritability2}, we observe that when there is no
interaction between gene and unknown environmental factors in
generating $Y$ (as in the examples (1),(2),(4),(5)), the true
counterfactual heritability $\xi$, its tight lower bound $\xi_l'$, and
its lower bound $\xi_l$ are the same. This is because the outcomes
$Y(G)$ and $Y(G')$ are comonotonic. Additionally, we observe that the
narrow-sense heritability $h^2$ and broad-sense heritability $H^2$ are
both no greater than $\xi$. This is because $h^2$ and $H^2$ only
consider genetic effect as an independent component in determining the
outcome $Y$, and do not account for the interaction effect between
gene and environment. Furthermore, when the known covariate $X$ is not
included in the model, the upper bound $\xi_u'$ becomes trivial
($\xi_u'=1$).

\begin{table}[t]
	\centering
	\caption{A numerical comparison of different notions of
		heritability in a population. In this table, $X$ is an
		environmental factor that is independent of $G$, 
		and the genetic copy is derived from the population.}
	\label{tab:heritability2}
	\begin{adjustbox}{max width=\textwidth}
		\begin{tabular}{l cccccc}
			\toprule
			$Y(g)=$& Narrow $h^2$ & Broad $H^2$ & Counterfactual $\xi$ & Lower $\xi_l'$ & Tight lower $\xi_l$ & Upper $\xi_u'$ \\\hline
			$(1) \quad g_1+g_2+E_1$ & $0.67$ & $0.67$ & $0.67$ &$0.67$&$0.67$& $1$\\\hline
			$(2) \quad g_1g_2+E_1$ & $0.57$ & $0.71$ & $0.71$ &$0.71$&$0.71$& $1$\\\hline
			$(3) \quad g_1E_2+E_1$      & $0.36$    & $0.36$    & $0.45$ &$0.36$&$0.39$& $1$ \\\hline
			$(4)\quad g_1+X+E_1$&  $0.5$& $0.5$& $0.5$ & $0.5$ & $0.5$ & $0.75$\\\hline
			$(5)\quad g_1X+E_1$ &  $0.36$ & $0.36$ & $0.45$ & $0.45$ & $0.45$ & $0.75$\\\hline
			$(6)\quad 1(g_1+g_2+E_1>0)$        & $0.14$ & $0.30$ & $0.85$ &$0.30$&$0.85$&$1$\\\hline
			$(7)\quad 1(g_1g_2+E_1>0)$         & $0.19$ & $0.30$ & $0.66$ &$0.30$& $0.66$&$1$\\\hline
			$(8)\quad 1(g_1E_2+E_1>0)$  & $0.16$    & $0.19$ & $0.52$ &$0.19$& $0.52$ & $1$ \\\hline
			$(9)\quad 1(g_1+X+E_1>0)$ & $0.06$& $0.08$ & $0.67$ & $0.28$ & $0.67$ & $0.95$\\\hline
			$(10)\quad 1(g_1X+E_1>0)$ & $0.16$ & $0.19$& $0.52$& $0.52$& $0.52$ & $0.95$\\\hline
		\end{tabular}
	\end{adjustbox}
\end{table}

\subsection{Counterfactual heritability of one generation and
  comparison with twin-study heritability}
\label{sec:one-generation}

Our definition of counterfactual heritability requires a suitable
choice of environmental confounders $X$, so $\xi$ can be understood as
the proportion of phenotypic variation that is attributable (in a
counterfactual sense) to genetic variation \emph{given} $X$. In a
population that evolves naturally, such as humans, it is reasonable to
let $X$ collect all haplotypes of the parents inherited from grandparents, so $\xi$ essentially
represents the counterfactual heritability in \emph{one generation}
due to the randomness of meiosis.


The next example demonstrates, in a setting similar to Example
\ref{ex:one-gene}, a close connection between this notion of
one-generation counterfactual heritability and twin-study
heritability that is often calculated by Falconer's
formula,
\begin{equation}
\label{def:twin}
    h^2_{\text{twin}} = 2(\rho_{\text{MZ}}-\rho_{\text{DZ}}),
\end{equation}
where $\rho_{\text{MZ}}$ and $\rho_{\text{DZ}}$ are the correlations
between monozygotic and dizygotic twins.

\begin{example} \label{ex:twin}
        Consider $g \in \{AA, Aa, aa\}$ and the function $f$ in
        equation [\ref{eq:one-gene-f}].
        Suppose $Y_1(g)$ and $Y_2(g)$ are potential outcomes of two siblings in a family given by
        \[
        Y_1(g) = f(g) + F + E_1, ~ Y_2(g) = f(g) + F + E_2,
        \]
        where $F$ is a family ``fixed" effect (not necessarily observed) and $E_1, E_2$ are IID idiosyncratic noise variables. Let $G$ and $G'$ be two IID genotypes that are randomly determined given maternal genotype $G_M$ and paternal genotype $G_P$ according to Mendelian laws. If the siblings are monozygotic twins, their outcomes can be modeled by
        \[
        Y_1 = Y_1(G) = f(G) + F + E_1,~ Y_2 = Y_2(G) = f(G) + F + E_2.
        \]
        If the siblings are dizygotic twins, their outcomes can be modeled by
        \[
        Y_1 = Y_1(G) = f(G) + F + E_1,~ Y_2 = Y_2(G') = f(G') + F + E_2.
        \]
        This is a special case of the ACE model in the literature
        \citep{neale1992methodology}, which contains (additive) genetic, common
        environment, and random environment effects, with just a single causal
        gene. It is shown in Example \ref{ex:one-gene} that $\Var(f(G)) =
        \sigma_a^2 + \sigma_d^2$. By enumerating all possibilies of $X
        = (G_M,G_P)$, it can be shown that $\Cov(f(G), f(G')) = \frac{1}{2}
        \sigma_a^2 + \frac{1}{4} \sigma_d^2$ \citep[page
        436]{hartl1997principles}. Thus, the one-generation counterfactual
        heritability is given by
        \begin{align*}
                \xi &= \frac{\Var(Y_1(G)-Y_1(G'))}{2\Var(Y_1(G))}\\&=\frac{\Var(f(G))-\Cov(f(G),f(G'))}{\Var(Y_1(G))} \\&= \frac{\frac{1}{2} \sigma_a^2 + \frac{3}{4} \sigma_d^2}{\Var(Y_1)}.
        \end{align*}
        The twin-study heritability $h^2_{\text{twin}}$ can be computed using
        \begin{align*}
                \rho_{\text{MZ}}&=\frac{\Cov(Y_1(G),Y_2(G))}{\Var(Y_1(G))}= \frac{\Var(f(G)) + \sigma_F^2}{\Var(Y_1)}, \\
                \rho_{\text{DZ}}&=\frac{\Cov(Y_1(G),Y_2(G'))}{\Var(Y_1(G))} = \frac{\Cov(f(G), f(G')) + \sigma_F^2}{\Var(Y_1)},
        \end{align*}
        where $\sigma_F^2 = \Var(F)$ is the variance of the family fixed effect.
        From here, it is easy to show that 
       \[
       h^2_{\text{twin}} = 2 \xi.
       \]
        That is, our one-generation counterfactual heritability is exactly a
        half of the commonly used twin-study heritability in this simple ACE
        model. In fact, it is easy to see that the equality $h^2_{\text{twin}}
        = 2 \xi$ extends to polygenic traits (multivariate $G$) as long as
        there is no gene-environment interaction. It can also be shown
        (see Lemma \ref{lemma:twin} in the Online Supplement) that
        $h^2_{\text{twin}}=2\xi_l'$ if $Y_1(G)\indep Y_2(G')\mid X$
        (i.e., there is no shared environmental influence besides $X$).
\end{example}

Example \ref{ex:twin} shows that we can understand Falconer's formula
as (twice) the one-generation counterfactual heritability in twin
studies under an additive model. In this sense, counterfactual
heritability provides a natural generalization of Falconer's formula
to other study designs and non-additive models.

\begin{figure}[t]
        \centering
        \begin{subfigure}[b]{0.45\columnwidth} \centering
                \begin{tikzpicture}
                        \node (X) {$X$};
                        \node (Z) [right of=X] {$G$};
                        \node (Y) [right of=Z] {$Y$};

                        \draw [->] (X) -- (Z);
                        \draw [->] (Z) -- (Y);
                        \draw [->, bend left = 60] (X) to (Y);
                \end{tikzpicture}
                \caption{A causal graph.}
                \label{fig:direct-indirect-a}
        \end{subfigure}
        \begin{subfigure}[b]{0.45\columnwidth} \centering
                \begin{tikzpicture}
                        \node (X) {$X$};
                        \node (Z) [right=0.4cm of X] {$G$};
                        \node (Y) [right=0.4cm of Z] {$Y(X, G)$};

                        \draw [->] (X) -- (Z);
                        \draw [->] (Z) -- (Y);
                        \draw [->, bend left = 45] (X) to (Y);
                \end{tikzpicture}
                \caption{Potential outcomes.}
                \label{fig:direct-indirect-b}
        \end{subfigure}

        \begin{subfigure}[b]{0.45\columnwidth} \centering
                \begin{tikzpicture}
                        \node (X) {$X$};
                        \node (Z) [right=0.4cm of X] {$G'$};
                        \node (Y) [right=0.4cm of Z] {$Y(X, G')$};

                        \draw [->] (X) -- (Z);
                        \draw [->] (Z) -- (Y);
                        \draw [->, bend left = 45] (X) to (Y);
                \end{tikzpicture}
                \caption{Fraternal counterfactual.}
                \label{fig:direct-indirect-c}
        \end{subfigure}
        \begin{subfigure}[b]{0.45\columnwidth} \centering
                \begin{tikzpicture}
                        \node (X) {$X'$};
                        \node (Z) [right=0.4cm of X] {$G''$};
                        \node (Y) [right=0.4cm of Z] {$Y(X', G'')$};

                        \draw [->] (X) -- (Z);
                        \draw [->] (Z) -- (Y);
                        \draw [->, bend left = 45] (X) to (Y);
                \end{tikzpicture}
                \caption{Unrelated counterfactual.}
                \label{fig:direct-indirect-d}
        \end{subfigure}
        \begin{subfigure}[b]{0.45\columnwidth} \centering
                \begin{tikzpicture}
                        \node (X) {$X$};
                        \node (Z) [right=0.4cm of X] {$G''$};
                        \node (Y) [right=0.4cm of Z] {$Y(X, G'')$};
                        \draw [->] (Z) -- (Y);
                        \draw [->, bend left = 45] (X) to (Y);
                \end{tikzpicture}
                \caption{Adopted counterfactual.}
                \label{fig:direct-indirect-e}
        \end{subfigure}
        \caption{An illustration of different notions of counterfactuals
                and notions of counterfactual heritability. Panel (a) shows a
                simple graphical model for direct and indirect genetic effects
                ($X$ is parental haplotype, $G$ is offspring genotype, $Y$ is
                offspring phenotype). Panel (b) shows the same graph with the
                phenotype $Y$ denoted by its potential outcome $Y(X,G)$ (they
                are equal by the consistency assumption). The one-generation $\xi$
                compares (b) with a fraternal counterfactual generated by panel (c),
                while the population $\xi$ compares (b) with a unrelated
                counterfactual generated according to (d). The notion of
                heritability used in RDR compares (b) with an adopted
                counterfactual generated according to (e).}
        \label{fig:direct-indirect}
\end{figure}

\subsection{Distinguishing notions of heritability by the
        counterfactual comparison}

To further illustrate the concept of counterfactual heritability of
one generation, it is useful to consider the simple graphical model in
Figure \ref{fig:direct-indirect-a} which shows that parental haplotype
$X$ has a direct and indirect effect (via offspring genotypes) on the
offspring phenotype $Y$. The unconfoundedness assumption, that is $G
\indep \{Y(g): g \in \mathcal{G}\} \mid X$ in [\ref{orth}], holds
under a standard interpretation of Figure \ref{fig:direct-indirect} as a
nonparametric structural equation model with independent errors \citep{pearl2009}. Let $G'$ be
an independent realization of $G$ given $X$, $X'$ be an independent
realization of $X$, and $G''$ be an independent realization given
$X'$. In this case, the one-generation
counterfactual heritability is
\[
\xi_{\text{fraternal}} = \frac{\Var\{Y(X,G)-Y(X,G')\}}{2\Var(Y(X,G))},
\]
and the population counterfactual heritability is
\[
\xi_{\text{unrelated}} = \frac{\Var\{Y(X,G)-Y(X',G'')\}}{2\Var(Y(X,G))}.
\]
The data generating mechanisms for the counterfactuals in these
definitions are shown in panels (b-d) of Figure
\ref{fig:direct-indirect}. The one-generation
$\xi_{\text{fraternal}}$ compares an
individual in panel (b) of Figure \ref{fig:direct-indirect} to their
within-family fraternal counterfactual
in panel (c), while the population-based $\xi_{\text{unrelated}}$
compares an individual in
panel (b) to their population (unrelated) counterfactual in panel (d).
In reality, because the haplotype $X$ of the parents is further inherited from their parents, and so on, it is perhaps more appropriate to think about $\xi_{\text{unrelated}}$ as \emph{``infinite-generation'' heritability}.

Table \ref{tab:heritability3} provides a numeric comparison of different notions of heritability in a specific with-in family example.  Since $h^2$ and $H^2$ do not
account for family structure, their values remain the same as in Table
\ref{tab:heritability2}. For models without gene-environment
interaction (as seen in rows analogous to those in Table
\ref{tab:heritability2}), the true counterfactual heritability $\xi$,
the tight lower bound $\xi'_l$, and the lower bound $\xi_l$ are again
identical. Due to the additional information of parents' genotypes
coded in $X$, the upper bound $\xi'_u$ is strictly less than
$1$. Additionally, since now $G$ and $G'$ are conditionally
independent given the parents' genotype, the variance of
$Y(G) - Y(G')$ is smaller than the marginally independent case
in Table \ref{tab:heritability2}. Therefore, the values of $\xi$ in
Table \ref{tab:heritability3} are smaller than in Table
\ref{tab:heritability2}. The exact settings that generated Table
\ref{tab:heritability3} can be found in the Online Supplement.

\begin{table}[t]
        \centering
        \caption{A numerical comparison of different notions of
                heritability in a within-family setting. In this table $X$ is the the parents' average genotype corresponding to $G_1$.}
\label{tab:heritability3}
\begin{adjustbox}{max width=\textwidth}
\begin{tabular}{l ccccccc}
        \toprule
        $Y(g)=$ & Narrow $h^2$ & Broad $H^2$ & One-generation $\xi_{\text{fraternal}}$& Twin $h^2_{\text{twin}}$ & Lower $\xi_l'$ & Tight lower $\xi_l$ & Upper $\xi_u'$ \\\hline
        $(1)\quad g_1+g_2+E_1$ & 0.67 & 0.67 & 0.33 & 0.67 & 0.33 & 0.33 & 0.67 \\\hline
        $(2)\quad g_1g_2+E_1$ & 0.57 & 0.71 & 0.39 & 0.79 & 0.39 & 0.39 & 0.68 \\\hline
        $(3)\quad g_1E_2+E_1$ & 0.36 & 0.36 & 0.23 & 0.36 & 0.18 & 0.20 & 0.82 \\\hline
        $(4)\quad g_1+X+E_1$ & 0.64 & 0.64 & 0.14 & 0.29 & 0.14 & 0.14 & 0.43 \\\hline
        $(5)\quad g_1X+E_1$ & 0.60 & 0.63 & 0.15 & 0.30 & 0.15 & 0.15 & 0.42 \\\hline
        $(6)\quad 1(g_1+g_2+E_1>0)$ & 0.14 & 0.30 & 0.58 & 0.38 & 0.19 & 0.58 & 0.90 \\\hline
        $(7)\quad 1(g_1g_2+E_1>0)$ & 0.19 & 0.30 & 0.41 & 0.35 & 0.18 & 0.41 & 0.88 \\\hline
        $(8)\quad 1(g_1E_2+E_1>0)$ & 0.16 & 0.19 & 0.31 & 0.21 & 0.10 & 0.31 & 0.91 \\\hline
        $(9)\quad 1(g_1+X+E_1>0)$ & 0.14 & 0.20 & 0.27 & 0.11 & 0.05 & 0.27 & 0.76 \\\hline
        $(10)\quad 1(g_1X+E_1>0)$ & 0.21 & 0.25 & 0.28 & 0.19 & 0.10 & 0.28 & 0.82 \\\hline
\end{tabular}
\end{adjustbox}
\end{table}

\subsection{Comparison with relatedness disequilibrium regression}

The relatedness disequilibrium regression (RDR) is a recent approach
that defines heritability through random segregation and thus bears
conceptual similarity to counterfactual heritability
\citep{young2018relatedness}. A key observation in
\cite{young2018relatedness} is that traditional heritability estimator
can be biased by confounding environmental factors. In particular,
even though traditional kinship methods based on twins and siblings
are robust to additive family effects, they can still be biased by
sibling interference effects, meaning one's genotype may influence
the phenotype of their sibling (this is called ``reciprocal genetic
effects'' in \cite{young2018relatedness}). Consider the following
example.

\begin{example} \label{ex:rdr}
Consider a family with two children. Let $Y_i$ be the
phenotype and $g_i$ be the genotype for child $i = 1, 2$. Suppose the
potential outcomes of these two children are given by
\begin{align*}
        Y_1(g_1,g_2) &= f_{\text{direct}}(g_1) + f_{\text{indirect}}(g_2) + F + E_1, \\
        Y_2(g_1,g_2) &= f_{\text{indirect}}(g_1) + f_{\text{direct}}(g_2) +
        F + E_2,
\end{align*}
where $f_{\text{direct}}$ is the ``direct'' genetic effect of one's
genotype on themself, $f_{\text{indirect}}$ is the ``indirect''
genetic effect of one's genotype on their sibling,  $F$ is a family
fixed effect (not observed), and $E_1$, $E_2$ are IID idiosyncratic
noise variables. By computing the covariance of $Y_1$ and $Y_2$ in
terms of the covariance matrix of $(G_1,G_2)$, it is shown in
\citet[Supplementary Note, Appendix A.1]{young2018relatedness} that
Sib-Regression tends to
overestimate the following notion of genetic heritability:
\[
h^2_{\text{RDR}}=\frac{\Var(f_{\text{direct}}(G_1))}{\Var(Y_1)}.
\]
It is further shown that, by using genetic relatedness between
pairs of individuals and their parents, RDR can consistently estimate
this quantity \citep[Supplementary Note, Section
1.4]{young2018relatedness}. In this example, the one-generation
counterfactual heritability proposed here is given by
\[
\begin{split}
        \xi_{\text{fraternal}} &=\frac{\Var(Y_1(G_1,G_2)-Y_1(G_1',G_2))}{2\Var(Y_1)} \\
        &=\frac{\E\{\Var(f_{\text{direct}}(G_1)\mid X)\}}{\Var(Y_1)},
\end{split}
\]
where $G_1'$ is an IID copy of $G_1$ given $X$ ($G_2$ is held fixed in
this comparison because the genotype of child 2 is part of the
environmental factors for child 1). Thus,
$\xi_{\text{fraternal}}$ is different from $h^2_{\text{RDR}}$, and we
always have $\xi_{\text{fraternal}} \leq h^2_{\text{RDR}}$.
\end{example}

The definition of $h^2_{\text{RDR}}$ in this example motivates a
counterfactual quantity
\[
\xi_{\text{adopted}} = \frac{\Var\{Y(X,G)-Y(X,G'')\}}{2 \Var(Y(X,G))},
\]
where $G''$ is the genotype of the offspring in a unrelated family
with parental haplotype $X'$ independent of $X$; see panels (d) and
(e) in Figure \ref{fig:direct-indirect}. So in the definition of
$\xi_{\text{adopted}}$, an individual is compared with their adopted
counterfactual. It is not difficult to show that $h^2_{\text{RDR}} =
\xi_{\text{adopted}}$ in the setting of Example \ref{ex:rdr}. However,
even though it may be sensible to compare to an adopted counterfactual
in some scenarios, this definition seems to be flawed: in
the setting of Example \ref{ex:rdr},
$h^2_{\text{RDR}}$ or $\xi_{\text{adopted}}$ can be larger than $1$
when $f_{\text{direct}}(G_1)$ and $f_{\text{indirect}}(G_2)$ are
negatively correlated.

\subsection{Comparison with plant breeding heritability}
In the plant breeding literature, heritability is usually defined on
an entry-mean basis because the phenotype of interest is usually an
aggregated value over replicated genotypes in experiments
\citep{piepho2007computing,schmidt2019heritability}. The definition of
counterfactual heritability can be easily extended to this case by
considering the counterfactual of the aggregated phenotype, as
illustrated below.

Let $Y(g, x)$ denote the potential outcome of a phenotype of the plant
when the genotype is set to $g \in \{1,\dots,n_g\}$ and environment is set
to $x \in \{1,\dots,n_x\}$. A plant
breeding experiment often measures $Y_i = Y(G_i, X_i), i=1,\dots,n$
under a factorial design such that
\[
\sum_{i=1}^n 1_{\{G_i = g, X_i = e\}} = n_r,~\text{for all}~g~\text{and}~x,
\]
where $n_r$ is the number of replications (so $n = n_g n_x
n_r$). Denote the
mean phenotype for all plants with genotype $g = 1,\dots,n_g$ and
environment $x = 1,\dots,n_x$ as
\[
\bar{Y}(g, x) = \frac{1}{n_r} \sum_{i=1}^n 1_{\{G_i = g, X_i = x\}}
Y_i(g,x).
\]
One can think of $\bar{Y}(g, x)$ as the potential outcome of the mean
phenotype $\bar{Y}$ when we set the genotype as $g$ and environment as
$x$ in an experiment with $n_r$ replications. Furthermore, denote the mean
phenotype for all plants with genotype $g = 1,\dots,n_g$ as
\begin{equation} \label{eq:y-bar-g}
        \bar{Y}(g) = \frac{1}{n_x} \sum_{x=1}^{n_x} \bar{Y}(g, x) =
        \frac{1}{n_r n_x} \sum_{i=1}^n 1_{\{G_i = g\}} Y_i,
\end{equation}
which can be thought as the potential outcome of the mean phenotype $\bar{Y}$ when we set the genotypes as $g$ in $n_r n_x$ replications over $n_x$ environmental conditions.

Let $G$ and $G'$ be two independent random variables that are both
uniformly distributed over $\{1,\dots,n_g\}$, and let $X$ be
uniformly distributed over $\{1,\dots,n_x\}$. Following equation [\ref{eq:def-h2}], the counterfactual heritability of the mean
phenotype $\bar{Y}$ is given by
\[
\xi = \frac{\Var(\bar{Y}(G) - \bar{Y}(G'))}{2 \Var(\bar{Y}(G))}.
\]
We next compare this with an existing definition of plant breeding
heritability assuming $G$ and $X$ are independent by design.

\begin{example}
        Consider the following additive model for the potential outcome of the
        individual phenotype:
        \begin{equation} \label{eq:plant-additive}
                Y(g,x) = \mu + \alpha(g) + \beta(x) + \gamma(g, x) + E,
        \end{equation}
        where $\alpha(g)$ represents the genetic main effect, $\beta(x)$ the
        environmental main effect, $\gamma(g,x)$ the gene-environment
        interaction, and $E$ the idiosyncratic noise. Let
        the variance of each of these terms (under uniformly distributed $G$ and
        $X$) be denoted as
        \[
        \begin{aligned}
        \sigma_G^2 &= \Var(\alpha(G)),~\sigma_X^2 =
        \Var(\beta(X)),
        \sigma_{GX}^2 &= \Var(\gamma(G, X)),~\sigma_E^2 = \Var(E).
        \end{aligned}
        \]
        Using equation [\ref{eq:y-bar-g}], we have
        \[
        \bar{Y}(g) = \mu + \alpha(g) + \frac{1}{n_x} \sum_{x=1}^{n_x}
        \{\beta(x) + \gamma(g, x)\} + \bar{E},
        \]
        where $\Var(\bar{E}) = \sigma_E^2 / (n_x n _r)$. The precise expression of the counterfactual heritability $\xi$ in terms of the variance components depends on our assumptions on the terms in [\ref{eq:plant-additive}]. If we assume a fixed-effects model with the following identification assumptions: $\sum_g \alpha(g) = \sum_x \beta(x) = \sum_g
        \gamma(g,x) = \sum_x \gamma(g,x) = 0$, then
        \[
        \xi = \frac{\sigma_G^2}{\sigma_G^2 + \sigma_E^2 / (n_x n_r)}.
        \]
        In contrast, if we assume a random-effects model in which $\alpha(g), \beta(x), \gamma(g,x)$ are independent, mean-zero random variables, then
        \[
        \xi = \frac{\sigma_G^2+\sigma^2_{GX}/n_x}{\sigma_G^2 + \sigma_X^2 /
                n_x + \sigma^2_{GX}
                / n_x + \sigma_E^2 / (n_x n_r)}.
        \]
        Both of these expressions are slightly different from the
        typical definition of
        plant heritability \cite[eq.\ 6]{schmidt2019heritability}:
        \[
        H_{\text{plant}}^2 = \frac{\sigma_G^2}{\sigma_G^2 + \sigma^2_{GX}
                / n_x + \sigma_E^2 / (n_x n_r)}.
        \]
        Interestingly, \citet{schmidt2019heritability} assumes that the terms
        in [\ref{eq:plant-additive}] are random effects and cited prior work
        for the above expression of $H_{\text{plant}}^2$. A closer
        investigation of the derivation in \citet[p.\ 27-28]{holland2002} shows
        that $H_{\text{plant}}^2$ is derived, however, under a
        fixed-effects model by computing the expectation of sample estimator
        of certain variance components.
\end{example}

The above example shows that existing definitions of heritability in
breeding experiments based on variance components may not have a clear
causal interpretation even under relatively simple models. Adopting
a counterfactual perspective can help researchers state their
assumptions (e.g.\ random effects versus fixed effects) more explictly
and interpret their estimand more easily.
\section{Discussion}
This article provides a unifying, counterfactual perspective on
heritability. As a major advantage, our notion of counterfactual
heritability does not depend on specific statistical modeling
assumptions. Our results and comparisons with existing
notions of heritability highlight the importance
of clarifying the causal structural assumptions and counterfactual
comparisons in reasoning with heritability. This is in line with in
the ongoing ``causal inference revolution'' in statistics and other
disciplines, where the focus of scientific research is shifted from
statistical models and estimators to study designs and estimands/parameters.

A key requirement in defining counterfactual heritability is to find
background environmental or genetic factors $X$ such that the genotype
$G$ is \emph{randomized} conditional on $X$. The appropriate choice of
$X$ depends on the design of the study. In plan breeding experiments,
the natural choice of $X$ is different and may include control
variables used in the design. In observational human genetics study,
the most natural choice of $X$ is the parental haplotypes, leading to
the notion of one-generation counterfactual heritability discussed in
Section \ref{sec:one-generation}. Reassuringly, in twin studies this
basically reduces to familiar Falconer's formula under additive
models. As discussed earlier, the notion of one-generation
heritability is closely related to recent discussion in human genetics
that highlight the distinction between (direct) genetic effects and
(indirect) nurturing effects \citep{kong2018nature,
young2018relatedness}. Our framework can be broadly viewed as a
nonparametric, counterfactual extension to those efforts.

Counterfactual heritability can be further extended to a
notion of counterfactual explainability of other factors. For example,
the same definition can be used to define the counterfactual
explainability of specific genetic factors and can be further extended
to define the explainability of their interactions; similarly, it can
be used to define the heritability of environmental factors and
extended to define the explainability of gene-environment
interaction. This is explored by
\citet{gaoCounterfactualExplainabilityBlackbox2024} beyond the
genetic heritability problem and is closely related to the functional analysis of
variance \citep{hoeffding1948,hookerGeneralizedFunctionalANOVA2007}.

Our notion of counterfactual heritability is naturally defined for
continuous (``quantitative'') and binary (``qualitative'') traits. For
binary traits, it is also possible to define its counterfactual
heritability through a latent, continuous liability index.  
Moreover, one can naturally generalize [\ref{eq:def-h2}] to define  counterfactual the genetic correlation between two phenotypes $Y$ and $Z$ as \[
\rho = \frac{\Cov(Y(G) - Y(G'), Z(G)-  Z(G'))}{2 \sqrt{\Var(Y(G)) \Var(Z(G))}}.
\]
We leave the investigation of these quantities (e.g., their bounds and interpretations) for future work.

When discussing heritability, it is important to bear in mind that
nature vs.\ nurture is a complex question, and
heritability is just one number and provides inherently limited
information. The fact that counterfactual heritability is only partially identified
(can only be bounded using empirical data) reflects our inherent
uncertainty about the nature vs.\ nurture decomposition, due to the
impossibility of ascertaining the interaction strength between genes
and unobserved environmental factors. We believe any notion of
heritability should be interpreted against this backdrop.

\section*{Acknowledgment}
Qingyuan Zhao was supported in part by the Engineering and Physical Sciences
  Research Council (EP/V049968/1). Hongyuan Cao is supported in part by NSF DMS 2311249. The authors thank Yunpeng Wang and
  Matthew Stephens for their comments on oral presentations of this
  work.

\newpage
\appendix
\section{Proofs} \label{sec:proof}
\subsection{Proofs with counterfactual heritability within population}
\numberwithin{equation}{section}
\setcounter{equation}{0}

\numberwithin{table}{section}
\setcounter{table}{0}

\begin{lemma}\label{lemma_basic_eq}
        Let $A, B$ be two random variables with
        distribution functions $F_A$ and $F_B$. Then
        \[
        \E(F^{-1}_A(U) - F'^{-1}_B(U))^2\leq \E(A - B)^2 \leq \E(F_A^{-1}(U) -
        F_B^{-1}(1 - U))^2.
        \]
        where $U$ is a uniform random variable on $[0,1]$ and these inequalities are tight.
\end{lemma}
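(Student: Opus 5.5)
The approach is to reduce the claim to a statement about the covariance of $A$ and $B$, and then invoke the classical Fréchet–Hoeffding / rearrangement inequality. (The $F'$ in the displayed lower bound is read as $F$, so the lower bound is $\E(F_A^{-1}(U)-F_B^{-1}(U))^2$.) Assume $A$ and $B$ have finite second moments, since otherwise the statement is vacuous or all three quantities are infinite. Expanding the square gives
\[
\E(A-B)^2 = \E A^2 + \E B^2 - 2\E(AB).
\]
The first two terms depend only on the marginals $F_A$ and $F_B$. Moreover $F_A^{-1}(U)\overset{d}{=}A$ and both $F_B^{-1}(U)$ and $F_B^{-1}(1-U)$ are distributed as $B$. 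So the inequalities are equivalent to
\[
\E\{F_A^{-1}(U)F_B^{-1}(1-U)\}\le \E(AB)\le \E\{F_A^{-1}(U)F_B^{-1}(U)\},
\]
where $(A,B)$ ranges over all couplings of the fixed marginals.

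To prove this, use Hoeffding's covariance identity:
\[
\Cov(A,B)=\iint \{H(a,b)-F_A(a)F_B(b)\}\,da\,db,
\]
where $H$ is the joint distribution function of $(A,B)$. The Fréchet–Hoeffding bounds give, pointwise,
\[
\max\{F_A(a)+F_B(b)-1,0\}\le H(a,b)\le \min\{F_A(a),F_B(b)\}.
\]
Integrating this sandwich bounds $\Cov(A,B)$, and hence $\E(AB)$, between the values attained by the extremal couplings. It then remains to identify those couplings. The comonotone pair $(F_A^{-1}(U),F_B^{-1}(U))$ has joint distribution function exactly $\min\{F_A(a),F_B(b)\}$, because
\[
\{F_A^{-1}(U)\le a\}=\{U\le F_A(a)\}
\]
by the Galois property of the generalized inverse. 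Similarly, the countermonotone pair $(F_A^{-1}(U),F_B^{-1}(1-U))$ has joint distribution function $\max\{F_A(a)+F_B(b)-1,0\}$. This yields both inequalities, and tightness is immediate because these two couplings attain the bounds.

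The main obstacle is justifying Hoeffding's identity and the interchange of integrals when $A$ and $B$ are unbounded. I would handle this via the representation
\[
A-A'=\int\{1(A'\le a)-1(A\le a)\}\,da
\]
for an independent copy $(A',B')$, followed by Fubini, which is justified by finite second moments. An alternative for the upper bound is a discrete rearrangement inequality followed by an approximation argument. The identity route is cleaner, however, because it covers both bounds simultaneously.
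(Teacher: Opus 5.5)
Your proposal is correct and follows essentially the same route as the paper. Both fix the marginal-only terms, reduce to bounding $\Cov(A,B)$, apply Hoeffding's covariance identity (justified through an independent copy and the indicator-integral representation of $A-A'$), and sandwich the joint distribution function between the Fr\'echet--Hoeffding bounds; your Galois-property check that the comonotone and countermonotone pairs attain those bounds, and your Fubini justification under finite second moments, fill in details the paper leaves implicit.
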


\begin{proof}
        Note that
        $$
        \begin{aligned}
                \E(A - B)^2&=\E(A^2)+\E(B^2)-2\E(AB)\\&=\E(A^2)+\E(B^2)-2\E(A)\E(B)-2\Cov(A,B).
        \end{aligned}
        $$
        The first three quantities on the right hand side depend on $F_A$ and $F_B$ only, so it suffices to bound $\Cov(A,B)$.

        By the Hoeffding's Covariance Identity \citep{hoeffding1940masstabinvariante}, we have
        $$
        \Cov(A,B)=\iint \{F_{AB}(a,b)-F_A(a)F_B(b)\}dadb
        $$
        where $F_{AB}(a,b)=P(A\le a,B\le b)$ is the distribution function of the random vector $(A,B)$. (This identity can be proved by writing $2\Cov(A,B)=\E\{(A-A')(B-B')\}$ where $(A',B')$ is an IID copy of $(A,B)$ and
        $$A - A' = \int_{A'}^{A}1 \, da = \int_{-\infty}^\infty [1_{\{a\le A\}}-1_{\{a\le A'\}}] \, da$$
        and the same for $B - B'$.)

        The Fréchet–Hoeffding copula bounds \citep{frechet1951tableaux} say the following:
        $$
        \max\{F_A(a)+F_B(b)-1,0\}\le F_{AB}(a,b)\le \min\{F_A(a),F_B(b)\}.
        $$
        The lower bound is tight when $A=F^{-1}_A(U)$ and
        $B=F^{-1}_B(1-U)$, and the upper bound is tight when
        $A=F^{-1}_A(U)$ and $B=F^{-1}_B(U)$, where $U$ is a uniform
        random variable on $[0,1]$. By plugging in these bounds to the
        integral $\int \int F_{AB}(a,b) \, da \, db$, we have the
        upper and lower bounds for $\E(A-B)^2$ stated in the Lemma.
\end{proof}

\begin{proof}[Proof of Proposition \ref{thm:binary-bound}]
        Using [\ref{orth}], we have
        \[
        \E\{Y(G) - Y(G') \mid X\} = 0.
        \]
        Thus, by the law of total variance, we have
        \begin{align*}
                V =& \Var(Y(G)-Y(G')) \\
                =& \E\{\Var(Y(G)-Y(G') \mid X)\} \\
                =&\E\big(\E\big[\{Y(G)-Y(G')\}^2 \mid X\big]\big)\\
                =&2 p_0p_1\E\big(\E\big[\{Y(0)-Y(1)\}^2\mid X\big]\big),
        \end{align*}
        where $p_g = \P(G = g), g=0,1$.

        By Lemma \ref{lemma_basic_eq}, we have
        $$
        \begin{aligned}
                V \geq & 2p_0p_1\E\big(\E\big[\{F^{-1}_{0,X}(U)-F^{-1}_{1,X}(U)\}^2\mid X\big]\big) \\
                = & \E\big(\E \big[\{F^{-1}_{G,X}(U)-F^{-1}_{G',X}(U)\}^2\mid X \big]\big) \\
                = &\E\big[\{F^{-1}_{G,X}(U)-F^{-1}_{G',X}(U)\}^2\big].
        \end{aligned}
        $$
        Similarly, we have
        $$
        V \leq \E\big[\{F^{-1}_{G,X}(U)-F^{-1}_{G,X'}(1-U)\}^2\big].
        $$
        Therefore, we have $\xi_l\le\xi\le \xi_u$ and these inequalities are tight.
\end{proof}

\begin{proof}[Proof of Theorem \ref{thm:main-bound}]
The same bound $\xi_l \leq \xi \leq \xi_u$ can be derived using the law of total variance
$$
\Var(Y)=\E(\Var(Y\mid X))+\Var(\E(Y\mid X)),
$$
and
$$
\begin{aligned}
\E(\Var(Y\mid X))&=\E[\Var\{\E(Y\mid X,G)\mid X\}]+\E[\E\{\Var(Y\mid X,G)\mid X\}]
\\&=\E[\Var\{\E(Y\mid X,G)\mid X\}]+\E\{\Var(Y\mid X,G)\}
\end{aligned}.
$$
By following the same steps as in the proof of Proposition \ref{thm:binary-bound}, it is straightforward to show that
$$
\begin{aligned}
V &= \Var(Y(G)-Y(G'))\\&= \E\big(\E\big[\{Y(G)-Y(G')\}^2 \mid X\big]\big) \\&\ge \E[\{F^{-1}_{G,X}(U)-F^{-1}_{G,X'}(U)\}^2]
\end{aligned}$$
also holds for non-binary $G$. The equality holds when
$Y(G)=F^{-1}_{G,X}(U)$, which is equivalent to the condition that all
potential outcomes are comonotonic given $X$. This establishes the
tight inequality $\xi_l \leq \xi$. Similarly, we have $\xi \leq
\xi_u$, but this upper bound is not tight in general when $G$ takes
on more than two values (when there are three variables, it's not
possible for all three pairs to be countermonotonic).

For the other bounds, consider the alternative expression
\begin{align}
        &\frac{1}{2}\Var(Y(G) - Y(G')) \nonumber \\
        =& \frac{1}{2}\E\{\Var(Y(G) - Y(G') \mid X)\} \nonumber \\
        =& \E\{\Var(Y \mid X)\} - \E\{\Cov(Y(G), Y(G') \mid X)\}. \label{eq:xi-decomposition}
\end{align}
It suffices to bound the second term on the right hand side. The inequality $\xi_{l}' \le \xi$ follows from
\begin{align*}
        &\Cov(Y(G), Y(G') \mid X) \\
        =& \Cov\{ \E(Y(G) \mid G, G', X), \E(Y(G') \mid G, G', X) \mid X \} \\
        &+ \E\{\Cov(Y(G), Y(G') \mid G, G', X) \mid X \} \\
        =& \E\{\Cov(Y(G), Y(G') \mid G, G', X) \mid X \} \\
        \leq& \E \{ \Var(Y(G) \mid G, G', X) \mid X\} \\
        =& \E \{\Var(Y \mid G, X) \mid X \},
\end{align*}
where the first equality follows from the law of total covariance, the
second equality follows from [\ref{orth}] (note that $G$ and $G'$ are
conditionally independent given $X$), the inequality follows from
$\Cov(A,B) \leq \{\Var(A) + \Var(B)\}/2$ for any random variables $A$
and $B$ and symmetry, and the last equality follows from [\ref{orth}]
and consistency of potential outcomes. Therefore, we have
\begin{align*}
    &\frac{1}{2}\Var(Y(G) - Y(G'))\\
    =& \E\{\Var(Y \mid X)\} - \E\{\Cov(Y(G), Y(G') \mid X)\}\\
    \ge & \E\left[\Var(Y \mid X) -  \E \{\Var(Y \mid G, X) \mid X \}\right]\\
    = & \E\left[\Var\{\E(Y\mid G,X)\mid X\}\right].
\end{align*}

The upper bound $\xi \leq \xi_u'$ can be derived similarly by conditioning on the potential outcomes schedule $Y(\cdot) = \{Y(g): g \in \mathcal{G}\}$:
\begin{align*}
        &\Cov(Y(G), Y(G') \mid X) \\
        =& \Cov\{ \E(Y(G) \mid Y(\cdot), X), \E(Y(G') \mid Y(\cdot), X) \mid X \} \\
        &+ \E\{\Cov(Y(G), Y(G') \mid Y(\cdot), X) \mid X \} \\
        =& \Cov\{ \E(Y(G) \mid Y(\cdot), X), \E(Y(G') \mid Y(\cdot), X) \mid X \} \\
        =&\Var\{ \E(Y(G) \mid Y(\cdot), X)\mid X \}\\\ge &0.
\end{align*}
So $V / 2 \leq \E\{\Var(Y \mid X)\}$ and $\xi \leq \xi_u'$.

It remains to prove $\xi_l' \leq \xi_l$.
This follows from
$$
\begin{aligned}
        &\E\left[\{F^{-1}_{G,X}(U)-F^{-1}_{G',X}(U)\}^2\right]
        \\=&\E\left(\E\left[\{F^{-1}_{G,X}(U)-F^{-1}_{G',X}(U)\}^2\mid G,G',X\right]\right)
        \\\ge &\E\big[\big\{\E(F^{-1}_{G,X}(U)\mid G,G',X)-\E(F^{-1}_{G',X}(U)\mid G,G',X)\big\}^2\big]
        \\
        =&\E\big[\big\{\E(Y(G)\mid G,G',X)-\E(Y(G')\mid G,G',X)\big\}^2\big]\\
        =& 2\E\left\{[\E(Y(G)\mid G,G',X)]^2-\E(Y(G)\mid G,X)\E(Y(G')\mid G',X)\right\}\\
        =& 2\E\left(\E\left\{[\E(Y(G)\mid G,X)]^2\mid X\right\}-\big[\E\{\E(Y(G)\mid G,X)\mid X\}\big]^2\right)\\=& 2\E\left[\Var\{\E(Y\mid G,X)\mid X\}\right].
\end{aligned}
$$
The equality $\xi_l' = \xi_l$ holds when the quantile difference $F^{-1}_{G,X}(u)-F^{-1}_{G',X}(u)$ is a constant (does not depend on $u$ but may depend on $G, G', X$).
\end{proof}

\begin{lemma} \label{lem:xil-h2}
If $X\indep G$, then $\xi_l'\ge H^2$.
\end{lemma}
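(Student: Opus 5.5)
The plan is to reduce both sides to variance components of the single regression function $m(G,X) := \E(Y \mid G, X)$, and then use a two-way functional ANOVA (Hoeffding) decomposition, which is orthogonal because $G \indep X$.

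\emph{Step 1: rewrite both sides in terms of $m$.} By the law of total variance conditional on $X$,
\[
\E[\Var\{m(G,X)\mid X\}] = \Var\{m(G,X)\} - \Var\{\E(m(G,X)\mid X)\} .
\]
The numerator of $\xi_l'$ is therefore $\Var(m) - \Var(\E(m\mid X))$. On the other side, the tower property gives $\E(Y\mid G) = \E\{m(G,X)\mid G\}$, so the numerator of $H^2$ in [\ref{def:H2}] is $\Var\{\E(m\mid G)\}$. Both quantities share the denominator $\Var(Y)$. It thus suffices to show
\[
\Var(m) \;\ge\; \Var\{\E(m\mid G)\} + \Var\{\E(m\mid X)\}.
\]

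\emph{Step 2: orthogonal decomposition.} Write $\mu = \E(m)$, $a(G) = \E(m\mid G)-\mu$, $b(X) = \E(m\mid X)-\mu$, and $r(G,X) = m - \mu - a(G) - b(X)$, so that $m = \mu + a + b + r$. The key use of $X \indep G$ is the following:
\begin{itemize}
\item Independence gives $\E\{b(X)\mid G\} = \E\{b(X)\} = 0$, and symmetrically $\E\{a(G)\mid X\} = 0$.
\item Hence $\E(r\mid G) = a - a - 0 = 0$ and $\E(r\mid X)=0$.
\item The cross terms then vanish: $\E(ab) = \E(a)\E(b) = 0$ by independence, $\E(ar) = \E\{a\,\E(r\mid G)\} = 0$, and $\E(br) = \E\{b\,\E(r\mid X)\} = 0$.
\end{itemize}
Therefore
\[
\Var(m) = \Var(a) + \Var(b) + \Var(r) \ge \Var(a) + \Var(b),
\]
which is exactly the desired inequality. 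As a byproduct, equality holds iff $r \equiv 0$, i.e.\ $\E(Y\mid G,X)$ is additive in $G$ and $X$. This covers the case $(Y,G)\indep X$ mentioned in the main text, since there $m$ does not depend on $X$.

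\emph{Main obstacle.} The only delicate point is justifying $\E\{b(X)\mid G\}=0$ and the factorization $\E(ab)=\E(a)\E(b)$. Both rely on $G \indep X$ marginally, not merely on [\ref{orth}], and this is precisely where the counterexample $G = X$ from the main text breaks the argument. Integrability is handled by assuming $\Var(Y)<\infty$, which makes $m$, $a$, $b$, $r$ square-integrable by Jensen's inequality.
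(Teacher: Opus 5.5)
Your proof is correct and follows essentially the same route as the paper. The paper uses the same Hoeffding decomposition $Y=\mu+f_1(X)+f_2(G)+R$ with $h(X,G)=\E(R\mid X,G)$; your $a$, $b$, $r$ are exactly its $f_2$, $f_1$, $h$, and $G\indep X$ kills the same cross terms. The only difference is bookkeeping: the paper conditions on $X$ and computes $\E\{\Var(f_2(G)+h(X,G)\mid X)\}$ directly, whereas you use the unconditional identity $\Var(m)=\Var(a)+\Var(b)+\Var(r)$ together with the law of total variance, which also makes your equality characterization ($r\equiv 0$) immediate.
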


\begin{proof}
Consider Hoeffding's decomposition \citep{hoeffding1948}:
$$Y=\mu+f_1(X)+f_2(G)+R,$$
where $\mu=\E(Y)$, $f_1(X)=\E(Y\mid X)-\mu$, $f_2(G)=\E(Y\mid G)-\mu$, and $R=Y-\mu-f_1(X)-f_2(G)$. Denote $h(X,G)=\E(R\mid X,G)$. So $\E(f_1(X)) = \E(f_2(G)) = 0$ and $\E(h(X,G) | X) = \E(h(X,G) \mid G) = 0$.

Thererfore,
$$
\begin{aligned}
        &\E\{\Var(\E(Y\mid X,G)\mid X)\}\\=&\E\{\Var(f_2(G)+h(X,G)\mid X)\}
        \\=&\E\{\Var(f_2(G)\mid X)+\Var(h(X,G)\mid X)\}\\&+2\E\{\Cov(f_2(G),h(X,G)\mid X)\}
        \\=&\E\{\Var(f_2(G)\mid X)\}+\Var(h(X,G)\mid X)
        \\\ge& \Var(f_2(G)) \\
        =&\Var(\E(Y \mid G)).
\end{aligned}
$$
The third equality above is true because, by using $X \indep G$,
$$
\begin{aligned}
        &\E\{\Cov(f_2(G),h(X,G)\mid X)\}
        \\=&\E\{\E(f_2(G) h(X,G)\mid X)\}-\E\{\E(f_2(G) \mid X)\E(h(X,G) \mid X)\}
        \\=&\E\{\E(f_2(G) h(X,G)\mid X)\}
        \\=& \E(f_2(G)h(X,G))
        \\=&\E\{\E(f_2(G)h(X,G)\mid G)\}
        \\=&\E \{f_2(G)\cdot \E(h(X,G)\mid G)\}
        \\=&0.
\end{aligned}
$$
Thus, we have $\xi_l' \geq H^2$.
\end{proof}

\begin{lemma}
\label{lemma:decomp}
Under [\ref{orth}], we have
    \begin{equation}
        \label{eq:decomp}
        1-\xi=\Cor(Y(G),Y(G'))=\frac{\Var[\E\{Y\mid X\}]+\E[\Var\{E(Y\mid Y(\cdot), X) \mid X\}]}{\Var (Y)}.
    \end{equation}
Specially, if $X$ is empty (so [C1] is true),
     $$
    1-\xi=\frac{\Var\{\E(Y(G)\mid Y(\cdot))\}}{\Var (Y)}.
    $$
\end{lemma}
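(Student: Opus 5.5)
The plan is to build directly on the decomposition [\ref{eq:xi-decomposition}] from the proof of Theorem \ref{thm:main-bound}. Write $V = \Var(Y(G)-Y(G'))$. First, $\E\{Y(G)-Y(G')\mid X\}=0$ under [\ref{orth}], because $G$ and $G'$ have the same conditional law given $X$ and are independent of $Y(\cdot)$. Hence $V = \E\{\Var(Y(G)-Y(G')\mid X)\}$. Expanding this conditional variance and using that $Y(G')$ has the same conditional distribution as $Y(G)$ given $X$, we get
\[
\tfrac12 V = \E\{\Var(Y\mid X)\} - \E\{\Cov(Y(G),Y(G')\mid X)\}.
\]
Consistency $Y(G)=Y$ is used here. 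By the law of total variance,
\[
\E\{\Var(Y\mid X)\} = \Var(Y) - \Var\{\E(Y\mid X)\}.
\]
Dividing by $\Var(Y)=\Var(Y(G))$ and using $\Var(Y(G'))=\Var(Y(G))$ then gives
\[
1-\xi = \frac{\Var\{\E(Y\mid X)\} + \E\{\Cov(Y(G),Y(G')\mid X)\}}{\Var(Y)}.
\]
Separately, $1-\xi = 1 - V/\{2\Var(Y(G))\}$ is, by equal marginal variances, exactly $\Cor(Y(G),Y(G'))$. This is the first equality in [\ref{eq:decomp}].

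The remaining task is to identify $\E\{\Cov(Y(G),Y(G')\mid X)\}$ with $\E[\Var\{\E(Y\mid Y(\cdot),X)\mid X\}]$. I will reuse the conditioning argument from the upper-bound part of the proof of Theorem \ref{thm:main-bound}. Apply the law of total covariance conditioning on $(Y(\cdot),X)$. Given $(Y(\cdot),X)$, the variables $G$ and $G'$ are independent, since [\ref{orth}] gives joint independence of $G$, $G'$ and $Y(\cdot)$ given $X$. So $Y(G)$ and $Y(G')$, being functions of $G$ and of $G'$ respectively once $Y(\cdot)$ is fixed, are conditionally independent. This kills the within term. Moreover, $G'\overset{d}{=}G$ given $(Y(\cdot),X)$ implies
\[
\E(Y(G')\mid Y(\cdot),X)=\E(Y(G)\mid Y(\cdot),X).
\]
The between term therefore becomes $\Var\{\E(Y(G)\mid Y(\cdot),X)\mid X\}$, and consistency lets us write $Y$ in place of $Y(G)$. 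Taking expectations over $X$ finishes [\ref{eq:decomp}].

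For the special case with $X$ empty, $\Var\{\E(Y\mid X)\}=0$ and the inner conditional variance is unconditional, which yields the displayed formula immediately.

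The main obstacle is the measure-theoretic care in the conditional-independence step. One must justify that $Y(G)=\sum_g 1\{G=g\}Y(g)$ (or its general measurable analogue) is conditionally independent of $Y(G')$ given $(Y(\cdot),X)$. It also has to be checked that [\ref{orth}] is read as the mutual independence of $G$, $G'$ and $Y(\cdot)$ given $X$, not merely pairwise independence. I would state this reading explicitly and handle general $\mathcal{G}$ by noting that $Y(G)$ is a measurable function of $(G,Y(\cdot))$, assuming joint measurability of the potential-outcome process.
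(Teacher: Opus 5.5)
Your proposal is correct and follows essentially the paper's argument. The paper applies the law of total covariance directly to $\Cov(Y(G),Y(G'))$, first conditioning on $X$ (where $\E\{Y(G')\mid X\}=\E\{Y(G)\mid X\}$ makes the between term $\Var[\E\{Y\mid X\}]$) and then on $(Y(\cdot),X)$ (where conditional independence of $G$ and $G'$ removes the within term). You instead reach the same intermediate identity through the variance decomposition from the proof of Theorem \ref{thm:main-bound} plus the law of total variance, which is only a bookkeeping difference, and your $(Y(\cdot),X)$-conditioning step matches the paper's.
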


\begin{proof}
    Note that under [\ref{orth}], $G$ and $G'$ marginally have the same distribution, and
$$
\Var(Y(G)-Y(G'))=2\Var(Y(G))-2\Cov(Y(G),Y(G')).
$$ Therefore, we have
    $$
        \begin{aligned}
                \xi&=\frac{\Var(Y(G)-Y(G'))}{2\Var(Y)}
                \\=&\frac{2\Var(Y(G))-2\Cov(Y(G),Y(G'))}{2\Var(Y(G))}
                \\=&1-\Cor(Y(G),Y(G')).
        \end{aligned}
        $$
    The first equality holds.

   By the law of total covariance, 
$\Cov(Y(G),Y(G'))$ can be decomposed into
$$
\begin{aligned}
     &\Cov(Y(G),Y(G'))
    \\=& \Cov\left[\E\{Y(G)\mid X\},\E\{Y(G')\mid X\}\right]+\E\{\Cov(Y(G),Y(G')\mid X)\} \\
    =& \Var\left[\E\{Y(G)\mid X\}\right] + \E\left[\Var\{\E(Y(G) \mid Y(\cdot), X) \mid X\}\right]\\
    &+\E(\E\left[\E\{\Cov(Y(G),Y(G')\mid X, Y(\cdot) \} \mid X\right]) \\
=& \Var\left[\E\{Y(G)\mid X\}\right] + \E\left[\Var\left\{\E(Y(G) \mid Y(\cdot), X) \mid X\right\}\right].
\end{aligned}
$$
Therefore, we obtain \eqref{eq:decomp}.
\end{proof}

\subsection{Derivation of the result in Table \ref{tab:heritability1}}
\begin{proof}

  For $Y(g)=\beta_1g_1+\beta_2g_2+E_1$ the $\xi_u'=1$, and
  \[
    h^2=H^2=\xi=\xi_l=\xi_{l}'=\frac{\Var(\beta_1g_1+\beta_2g_2)}{\Var(Y)}=\frac{\beta_1^2+\beta_2^2}{\beta_1^2+\beta_2^2+1}
  .\]

        For $Y(g)=\beta g_1g_2+E_1$, we have $$\Var(\beta G_1G_2)=\beta^2 \{\E(G_1^2G_2^2)-\E^2(G_1G_2)\}=\frac{3}{16}\beta^2,$$ and
        $$\Var(Y)=\Var(\beta G_1G_2+E_1)=\Var(\beta G_1G_2)+\Var(E_1)=\frac{3\beta^2}{16}+\frac{1}{4}.$$
        In this case, $\xi_u'=1$ and   $$H^2=\xi=\xi'_l=\xi_l=\frac{\frac{3}{16}\beta^2}{\frac{3}{16}\beta^2+\frac{1}{4}}=\frac{3\beta^2}{3\beta^2+4}.$$
        Let $$\tilde \beta=\operatorname{argmin}_{\beta_0,\beta_1,\beta_2} \E\{(\beta G_1G_2+E_1-\beta_0-\beta_1G_1-\beta_2G_2)^2\}.$$
        By taking the derivative, we have
        $$
        \E\left\{\begin{pmatrix}1&G_1&G_2\\G_1&G_1^2&G_1G_2\\G_2&G_1G_2&G_2^2\end{pmatrix}\begin{pmatrix}\tilde \beta_0\\\tilde \beta_1\\\tilde\beta_2\end{pmatrix}\right\}=\beta\cdot \E\left\{\begin{pmatrix}G_1G_2\\G_1^2G_2\\G_1G_2^2\end{pmatrix}\right\}.
        $$
        Therefore,
        $$
        (\tilde \beta_0,\tilde\beta_1,\tilde\beta_2)^T=\begin{pmatrix}1&\frac{1}{2}&\frac{1}{2}\\\frac{1}{2}&\frac{1}{2}&\frac{1}{4}\\\frac{1}{2}&\frac{1}{4}&\frac{1}{2}\end{pmatrix}^{-1}\begin{pmatrix}\frac{1}{4}\\\frac{1}{4}\\\frac{1}{4}\end{pmatrix}\cdot\beta=(-\frac{\beta}{2},\frac{\beta}{2},\frac{\beta}{2})^T.
        $$
        We obtain that $$h^2=\frac{\Var(\tilde \beta_1g_1+\tilde\beta_2g_2)}{\Var(Y)}=\frac{2\cdot\frac{\beta^2}{4}\frac{1}{4}}{\frac{3\beta^2}{16}+\frac{1}{4}}=\frac{2\beta^2}{4+3\beta^2}.$$

        For $Y(g)=\beta g_1E_2+E_1$. We find that  $\E(Y\mid G)=0$, so that $H^2=0$, $\xi_l'=0$ and $h^2=0$. The variance of $Y$ is
        $$
        \begin{aligned}\Var(Y)&=\Var(\beta g_1E_2)+\Var(E_1)\\&=\beta^2\{\E(G_1^2E_2^2)-\E^2(G_1 E_2)\}+\frac{1}{4}\\&=\frac{\beta^2}{8}+\frac{1}{4}.\end{aligned}$$
        We find that
        $$
        \begin{aligned}
                \Var(Y(G)-Y(G'))&=\Var(\beta(G-G')E_2)
                \\&=\beta^2[\E\{(G-G')^2E_2^2\}-\E^2\{(G-G')E_2\}]
                \\&=\frac{1}{4}\beta^2 \E\{(G-G')^2\}
                \\&=\frac{1}{4}\beta^2 \Var(G-G')
                \\&=\frac{1}{2}\beta^2 \Var(G)
                \\&=\frac{\beta^2}{8}.
        \end{aligned}
        $$
        Therefore $\xi=\frac{\Var(Y(G)-Y(G'))}{2\Var(Y)}=\frac{\frac{\beta^2}{8}}{2(\frac{\beta^2}{8}+\frac{1}{4})}=\frac{\beta^2}{2\beta^2+4}$.

        Note that $Y\mid G\sim N(0,\frac{1}{4}\{1+\beta^2G^2\})$, so that $F^{-1}_{G}(U)=\frac{1}{2}\sqrt{1+\beta^2G^2}\Phi^{-1}(U)$, where $\Phi^{-1}$ is the quantile function of standard normal distribution. Therefore we have
        $$
        \begin{aligned}
                &\E\left[\{(F^{-1}_{G}(U)-F^{-1}_{G'}(U)\}^2\right]
                \\=&2\cdot\frac{1}{4}\E\left[\{(F^{-1}_{1}(U)-F^{-1}_{0}(U)\}^2\right]
                \\=&\frac{1}{2}\E\left[\{\frac{1}{2}(\sqrt{1+\beta^2}-1)\Phi^{-1}(U)\}^2\right]
                \\=&\frac{1}{8}(\sqrt{1+\beta^2}-1)^2 \E[\{\Phi^{-1}(U)\}^2]
                \\=&\frac{1}{8}(\sqrt{1+\beta^2}-1)^2
        \end{aligned}
        $$
        Therefore,
        $$
        \begin{aligned}
                \xi_l&=\frac{\E\left[\{(F^{-1}_{G}(U)-F^{-1}_{G'}(U)\}^2\right]}{2\Var(Y)}\\&=\frac{\frac{1}{8}(\sqrt{1+\beta^2}-1)^2}{2(\frac{\beta^2}{8}+\frac{1}{4})}\\&=\frac{(\sqrt{1+\beta^2}-1)^2}{2\beta^2+4}.
        \end{aligned}
        $$
        The upper bound $\xi_u'=1.$

        For $Y(g)=\beta_1g_1+\beta_2X+E_1$, we have
        $$h^2=H^2=\xi=\xi_l'=\xi_l=\frac{\Var(\beta_1g_1)}{\Var(Y)}=\frac{\beta_1^2}{\beta_1^2+\beta_2^2+1}.$$ The upper bound $\xi_u'$ is derived by
        $$
        \xi_u'=1-\frac{\Var(\E(Y\mid X))}{\Var(Y)}= 1-\frac{\Var(X\beta_2)}{\Var(Y)}=\frac{\beta_1^2+1}{\beta_1^2+\beta_2^2+1}.
        $$

        Now we concentrate on the case when $Y(g)=\beta g_1X+E_1$. With the same approach as the case when $Y(g)=\beta g_1E_2+E_1$, we can show that $h^2=0$, $H^2=0$ and $\xi=\frac{\beta^2}{4+2\beta^2}$. Besides, we have $\xi=\xi_l'=\xi_l=\frac{\beta^2}{4+2\beta^2}$ by comonotonicity.
        As $\Var(\E(Y\mid X))=\Var(\frac{1}{2}\beta X)=\frac{\beta^2}{16}$,
        the upper bound $\xi_u'$ can be calculated by
        $$
        \xi_u'=1-\frac{\Var(\E(Y\mid X))}{\Var(Y)}=1-\frac{\frac{1}{16}\beta^2}{\frac{\beta^2}{8}+\frac{1}{4}}=\frac{4+\beta^2}{4+2\beta^2}.
        $$
\end{proof}

\subsection{Relationship between twin heritability and the counterfactual heritability}

We assume that $g_m,g_f\in \{0,1,2\}$ are the mother and father's genes which are generated from the population. We assume that $Y_1(g)=f(g)+\epsilon_1$, and $Y_2(g)=f(g)+\epsilon_2$, where
$$
f(g)=\begin{cases}-a&g=0\\d&g=1\\a&g=2\end{cases}
$$
and $\epsilon_1,\epsilon_2$ are i.i.d. The children's genes are generated based on the parents' genes. The detail can be found in Table \ref{table:gene_2}. Denote $G_1$ and $G_2$ as the genotypes of participants. For MZ twin $G_1=G_2=G$ a.s., and for DZ twin $G_1\perp G_2\mid g_m,g_f$. It can be shown that
$$
\Var(f(G))=\sigma_a^2+\sigma_d^2,
$$
and
\begin{equation}
\label{eq:full_sib}
\Cov(f(G_1),f(G_2))=\frac{1}{2}\sigma_a^2+\frac{1}{4}\sigma_d^2.
\end{equation}

\begin{table*}[t]
        \centering
        \caption{The relationship between parents' geneotype and children's genotype and phenotype}
        \label{table:gene_2}
        \begin{tabular}{cccccc}
                \toprule
                Parent & Frequency & Child AA & Child Aa & Child aa & Mean phenotype \\
                \midrule
                AA+AA & $p^4$ & $1$ & $0$ & $0$ & $f({AA})$ \\
                AA+Aa & $4p^3q$ & $\frac{1}{2}$ & $\frac{1}{2}$ & $0$ & $\frac{1}{2} f({AA})+\frac{1}{2}f({Aa})$ \\
                AA+aa & $2p^2q^2$ & $0$ & $1$ & $0$ & $f({Aa})$ \\
                Aa+Aa & $4p^2q^2$ & $\frac{1}{4}$ & $\frac{1}{2}$ & $\frac{1}{4}$ & $\frac{1}{4} f({AA})+\frac{1}{2}f({Aa})+\frac{1}{4} f({aa})$ \\
                Aa+aa & $4pq^3$ & $0$ & $\frac{1}{2}$ & $\frac{1}{2}$ & $\frac{1}{2}f({Aa})+\frac{1}{2} f({aa})$ \\
          aa+aa & $q^4$ & $0$ & $0$ & $1$ & $f({aa})$ \\
          \bottomrule
        \end{tabular}
\end{table*}

The covariance between MZ twins is
$$
\begin{aligned}
        \rho_{MZ}&=\frac{\Cov(Y_1(G),Y_2(G))}{\Var(Y)}\\&=\frac{\Var(f(G))}{\Var(Y)}\\&=\frac{\sigma_a^2+\sigma_d^2}{\Var(Y)},
\end{aligned}
$$
The covariance between DZ twins is
$$
\begin{aligned}
        \rho_{DZ}&=\frac{\Cov(Y_1(G_1),Y_2(G_2))}{\Var(Y)}\\&=\frac{\Cov(f(G_1),f(G_2))}{\Var(Y)}\\&=\frac{\frac{1}{2}\sigma_a^2+\frac{1}{4}\sigma_d^2}{\Var(Y)}.
\end{aligned}
$$

The twin heritability is
$$
h^2_{\text{twin}}=2(\rho_{MZ}-\rho_{DZ})=(\sigma_a^2+\frac{3}{2}\sigma_d^2)/\Var(Y).
$$

The counterfactual heritability is defined as
$$
\begin{aligned}
        \xi&=\frac{\Var(Y_1(G_1)-Y_1(G_2))}{2\Var(Y)}\\&=\frac{\Var(f(G_1)-f(G_2))}{2\Var(Y)}\\&=\frac{\Var\{f(G_1)\}-\Cov\{f(G_1),f(G_2)\}}{\Var(Y)}\\&=\frac{\frac{1}{2}\sigma_a^2+\frac{3}{4}\sigma_d^2}{\Var(Y)}.
\end{aligned}
$$

In fact, the following lemma holds for the relationship between twin heritability $h^2_{\text{twin}}$ and $\xi_l'$.
\begin{lemma}
        \label{lemma:twin}
        Under [\ref{orth}], suppose $Y_1(g)$ and $Y_2(g)$ are potential outcomes of two siblings in a family with $Y_1(G)\mid X\overset{d}{=}Y(G)\mid X$ and $Y_2(G)\mid X\overset{d}{=}Y(G)\mid X$, $Y_1(G)\perp Y_2(G)\mid G,X$, and $Y_1(G)\perp Y_2(G')\mid X$. We have $h^2_{\text{twin}}=2\xi_l'$.
\end{lemma}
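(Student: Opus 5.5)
The plan is to compute $\rho_{\text{MZ}}$ and $\rho_{\text{DZ}}$ separately with the law of total covariance, choosing a different conditioning $\sigma$-field in each case so that the conditional covariance term vanishes by hypothesis. The difference $\rho_{\text{MZ}}-\rho_{\text{DZ}}$ should then reduce to the numerator of $\xi_l'$ in [\ref{eq:xi_lower}]. Throughout, the normalisation is $\Var(Y_1(G))=\Var(Y_2(G))=\Var(Y)$, which follows from $Y_j(G)\mid X\overset{d}{=}Y(G)\mid X$ and consistency $Y(G)=Y$.

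\emph{Monozygotic pair.} Both twins share the genotype $G$. Conditioning on $(G,X)$ gives
\[
\Cov(Y_1(G),Y_2(G))=\Cov\{\E(Y_1(G)\mid G,X),\E(Y_2(G)\mid G,X)\}+\E\{\Cov(Y_1(G),Y_2(G)\mid G,X)\}.
\]
The second term is zero because $Y_1(G)\indep Y_2(G)\mid G,X$. For the first term I need $\E(Y_j(G)\mid G,X)=\E(Y\mid G,X)$ for $j=1,2$. I would obtain this as $\E(Y_j(g)\mid X)=\E(Y(g)\mid X)$ evaluated at $g=G$. That step uses ignorability from [\ref{orth}] applied to each sibling's potential outcomes, together with the hypothesis that each sibling's outcome has the same conditional law as $Y$ given $X$, read at the level of the schedule $Y_j(\cdot)$. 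Consistency then finishes it. With this in hand, $\Cov(Y_1(G),Y_2(G))=\Var\{\E(Y\mid G,X)\}$.

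\emph{Dizygotic pair.} The twins have genotypes $G$ and $G'$, conditionally i.i.d.\ given $X$. Conditioning on $X$ gives
\[
\Cov(Y_1(G),Y_2(G'))=\Cov\{\E(Y_1(G)\mid X),\E(Y_2(G')\mid X)\}+\E\{\Cov(Y_1(G),Y_2(G')\mid X)\}.
\]
The second term vanishes by the assumption $Y_1(G)\indep Y_2(G')\mid X$. In the first term, $G'\overset{d}{=}G\mid X$ and the distributional assumptions give $\E(Y_1(G)\mid X)=\E(Y_2(G')\mid X)=\E(Y\mid X)$. Hence $\Cov(Y_1(G),Y_2(G'))=\Var\{\E(Y\mid X)\}$.

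\emph{Combining the two.} Subtracting and applying the law of total variance to $\E(Y\mid G,X)$ conditional on $X$ gives
\[
\Var\{\E(Y\mid G,X)\}-\Var\{\E(Y\mid X)\}=\E[\Var\{\E(Y\mid G,X)\mid X\}].
\]
Dividing by $\Var(Y)$ and multiplying by $2$ gives $h^2_{\text{twin}}=2(\rho_{\text{MZ}}-\rho_{\text{DZ}})=2\xi_l'$, by [\ref{eq:xi_lower}].

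The main obstacle is the identification step $\E(Y_j(G)\mid G,X)=\E(Y\mid G,X)$. The stated hypotheses only match the laws of $Y_j(G)$ and $Y(G)$ given $X$, not given $(G,X)$. I would close this gap by interpreting the hypothesis as equality of the potential-outcome schedules' conditional laws given $X$, and by invoking the conditional independence in [\ref{orth}] for each sibling. If that reading is not available, the lemma needs it as an explicit extra assumption.
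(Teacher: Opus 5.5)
Your proposal is correct and follows essentially the same route as the paper: the law of total covariance for the MZ and DZ pairs, with the conditional-covariance terms removed by $Y_1(G)\indep Y_2(G)\mid G,X$ and $Y_1(G)\indep Y_2(G')\mid X$. The only cosmetic difference is in the MZ term, where the paper conditions first on $X$ and then on $(G,X)$, whereas you condition on $(G,X)$ at once and apply the law of total variance at the end. The identification steps you flag, $\E(Y_j(G)\mid G,X)=\E(Y\mid G,X)$ and likewise $\E(Y_2(G')\mid X)=\E(Y\mid X)$, are used silently in the paper's proof as well, so your schedule-level reading of the hypotheses (same conditional law of $Y_j(\cdot)$ as $Y(\cdot)$ given $X$, plus ignorability for each sibling) is exactly what the paper implicitly assumes.
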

\begin{proof}
        For a pair of MZ twins, by the total law of covariance, we have
        $$
        \begin{aligned}
                &\Cov(Y_1(G),Y_2(G))\\=&\E\{\Cov(Y_1(G),Y_2(G)\mid X)\}+\Cov\{\E(Y_1(G)\mid X),\E(Y_2(G)\mid X)\}
                \\=&\E\{\Cov(Y_1(G),Y_2(G)\mid X)\}+\Var\{\E(Y(G)\mid X)\}
                \\=&\E[\Cov\{\E(Y_1(G)\mid G,X),\E(Y_2(G)\mid G,X)\mid X\}]
                \\&+\E[\E\{\Cov(Y_1(G),Y_2(G)\mid G,X)\mid X\}] +\Var\{\E(Y(G)\mid X)\}
                \\=&\E[\Var\{\E(Y(G)\mid G,X)\mid X\}] +\Var\{\E(Y(G)\mid X)\}.
        \end{aligned}
        $$
        For a pair of DZ twins, we have
        $$
        \begin{aligned}
                &\Cov(Y_1(G),Y_2(G'))\\=&\E\{\Cov(Y_1(G),Y_2(G')\mid X)\}+\Cov\{\E(Y_1(G)\mid X),\E(Y_2(G')\mid X)\}
                \\=&\Var\{\E(Y(G)\mid X)\}.
        \end{aligned}
        $$
        Therefore, we find
        $$
        \begin{aligned}
                h^2_{\text{twin}}=&2\frac{\Cov(Y_1(G),Y_2(G))-\Cov(Y_1(G),Y_2(G'))}{\Var(Y)}
                \\=&2\frac{\E[\Var\{\E(Y(G)\mid G,X)\mid X\}]}{\Var(Y)}
                \\=&2\xi_l'.
        \end{aligned}$$
\end{proof}

Here is the derivation of the equation \eqref{eq:full_sib}. Note $G_m$ and $G_f$ are the mother's and father's genotype, $G_1$ and $G_2$ are the genotypes of two siblings. Define $g(G)=f(G)-\E(f(G))$ to be the centralized genetic effect. We have $$p^2g(AA)+2pqg(Aa)+q^2g(aa)=0.$$
Note that $G_1\perp G_2\mid G_m,G_f$ and $G_1\overset{d}{=}G_2\mid G_m,G_f$.
We obtain that
$$
\begin{aligned}
        &\Cov(f(G_1),f(G_2))\\=&\E(g(G_1)g(G_2))
        \\=&\E[\E\{g(G_1)g(G_2)\mid G_m,G_f\}]
    \\=&\E[\E\{g(G_1)\mid G_m,G_f\}\E\{g(G_2)\mid G_m,G_f\}]
        \\=&p^4 \{g(AA)\}^2+4p^3q \{\frac{1}{2} g({AA})+\frac{1}{2}g({Aa})\}^2+2p^2q^2\{g({Aa})\}^2\\&\quad+4p^2q^2\{\frac{1}{4} g({AA})+\frac{1}{2}g({Aa})+\frac{1}{4} g({aa})\}^2\\&\quad+4pq^3\{\frac{1}{2}g({Aa})+\frac{1}{2} g({aa})\}^2+q^4\{g({aa})\}^2
        \\&\quad -\{p^2g(AA)+2pqg(Aa)+q^2g(aa)\}^2
        \\=&p^3q\{g({AA})-g({Aa})\}^2+pq^3\{g({aa})-g({Aa})\}^2\\&\quad+p^2q^2\{\frac{\{g(AA)+2g(Aa)+g(aa)\}^2}{4}-2g(Aa)^2-2g(AA)g(aa)\}
        \\=&\quad p^2q^2\left\{\frac{\{x+y+4g(Aa)\}^2}{4}-2g^2(Aa)-2\{x+g(Aa)\}\{y+g(Aa)\}\right\}\\&\quad +pq(p^2x^2+q^2y^2)
        \\=&pq[p^2x^2+q^2y^2+pq\{\frac{(x+y)^2}{4}-2xy\}]
\end{aligned}
$$
where $x=g(AA)-g(Aa)=a-d$, and $y=g(aa)-g(Aa)=-a-d$. It can be simplified by
$$
\begin{aligned}
        &\Cov(f(G_1),f(G_2))\\=&pq\{p^2(a-d)^2+q^2(-a-d)^2\}\\&\quad+p^2q^2\{\frac{(a-d-a-d)^2}{4}-2(a-d)(-a-d)\}
        \\=&pq\{p^2(a-d)^2+q^2(a+d)^2+pq(2a^2-d^2)\}
        \\=&pq\{(p^2+q^2+2pq)a^2+2(q^2-p^2)ad+(p^2+q^2-pq)d^2\}
        \\=&pq\{a^2+2(q-p)ad+(p^2+q^2-pq)d^2\}
        \\=&pq\{a+(q-p)d\}^2+p^2q^2d^2
        \\=&\sigma_a^2/2+\sigma_d^2/4.
\end{aligned}
$$

\subsection{Simulation details for Table \ref{tab:heritability2} and \ref{tab:heritability3}}
In this section, we provide details for deriving results in Table \ref{tab:heritability2} and \ref{tab:heritability3}. We first introduce some notations.
Consider two independent genes with minor allele frequency (MAF) $0.5$. Denote parents' genotypes as $G_p=(G_{1f},G_{1m},G_{2f},G_{2m})$, where the subscripts $f$ and $m$ indicate ``father'' and ``mother'', respectively. Each element of $G_p$ is i.i.d 
following distribution $\mathcal{P}_G$, with probabilities $0.25$, $0.5$, and $0.25$ for values $0$, $1$, and $2$, respectively. Conditional on parents' genotype, we can obtain children's genotype $G=(G_1,G_2)$ from Table \ref{table:gene_2}.  
Averaging over the population, marginally, $G_1\sim \mathcal P_G$ and $G_2\sim \mathcal P_G.$ 

\sloppy Let $X$ be the covariate, which 
contains
environmental variables in Table \ref{tab:heritability2} or parents' phenotypes in Table \ref{tab:heritability3}. In Table \ref{tab:heritability2}, $X\sim P_X$ 
independently of $G$  as an external environmental effect, where 
$\mathcal P_X(0),\mathcal P_X(0.5),\mathcal P_X(1),\mathcal P_X(1.5)$ and $\mathcal P_X(2)$ are $1/16,4/16,6/16,4/16$, and $1/16,$ respectively. The error terms, $E_1\sim N(\mu=0,\sigma^2=0.5)$ and $E_2\sim P_X$. Moreover, $E_1$ and $E_2$ are independent 
of $G$ and $G_p,$ and $G'=(G'_1,G'_2)$ is an i.i.d copy of $G$. We calculate 
heritability based on different definitions (e.g. narrow, broad, counterfactual) and obtain the lower and upper bounds of the counterfactual heritability. The results are summarized in Table \ref{tab:heritability2}.

In Table \ref{tab:heritability3}, the setting is exactly the same as that of Table \ref{tab:heritability2}, except that the covariate $X$ is defined as $X=(G_{1f}+G_{1m})/2$, representing the average parental genotype (i.e., a proxy for parental phenotype). This 
introduces a correlation between children's genotype and environmental factors. Additionally, $G'$ is an i.i.d copy of $G$ conditional on $G_p$, mimicking heritability within one generation. We also calculate $h^2_{\text{twin}}$ defined in (\ref{def:twin}).  
All results are presented in Table \ref{tab:heritability3}.

In the simulation, we 
know the 
data-generating process, which allows us to compute the true values of heritabilities in all cases except $\xi_l$. When we calculate $\xi$, $\xi_l'$, $\xi_u'$ and $h^2_{\text{twin}}$, which are defined in (\ref{eq:def-h2}),(\ref{eq:xi_lower}),(\ref{eq:xi_upper}) and (\ref{def:twin}), it suffices to evaluate $\E(Y\mid G,G_p)$, $\Var(Y\mid G,G_p)$, $\E(Y(G)-Y(G')\mid G,G',G_p)$ and $\Var(Y(G)-Y(G')\mid G,G',G_p)$. The results of these expressions are provided in subsection \ref{sec:detailed_ev}. The calculations of narrow-sense heritability $h^2,$ defined in (\ref{def:h2}), and broad-sense heritability $H^2$, defined in 
(\ref{def:H2}), involve calculating $\E(Y\mid G)$, which can be obtained from $\E(Y\mid G, G_p)$ and the posterior probability $P(G_p\mid G)$.

We next elaborate on the calculation of $\xi_l$, defined in (\ref{eq:xi-l}). In the binary case, the quantiles $F^{-1}_{G,X}(U)$ and $F^{-1}_{G',X}(U)$ (as defined in Proposition \ref{thm:binary-bound}) can be calculated directly by $P(Y=1\mid G,X)$. In the continuous case, when the tight lower bound $\xi_l$ does not have the same value as $\xi_l'$ (e.g. scenarios $(3)$ of Table \ref{tab:heritability3}), the calculation of $\xi_l$ involves the quantile functions $F^{-1}_{G,X}(U)$ and $F^{-1}_{G',X}(U)$. These are the quantiles of a mixture of normal distributions, and we resort to a numerical approximation. Specifically, for each fixed combination of the parents' genotype ($G_p$), we can generate two sets of potential outcomes: $Y_{1i}(G)\mid G_p$, and $Y_{2i}(G')\mid G_p$, for $i=1,\cdots ,N$, where $N=1,000,000$. We then sort each sample to obtain the order statistics $Y_{1(i)}$, and $Y_{2(i)}$. 
By law of large numbers,
$$\frac{1}{N}\sum\limits_{i=1}^N (Y_{1(i)}-Y_{2(i)})^2\to E\big[\big(F^{-1}_{G_p,G}(U)-F^{-1}_{G_p,G'}(U)\big)^2\mid G_p,G,G'\big].$$ The tight lower bound $\xi_l$ can then be estimated from (\ref{eq:xi-l}).


\subsubsection{Conditional expectations and conditional variances}
\label{sec:detailed_ev}
Here are the results of $\E(Y\mid G,G_p)$, $\Var(Y\mid G,G_p)$, $\E(Y(G)-Y(G')\mid G,G',G_p)$ and $\Var(Y(G)-Y(G')\mid G,G',G_p)$ in different settings of Table \ref{tab:heritability3}. Denote $\Phi(\cdot)$ as the cdf of a standard normal distribution.

When $Y=G_1+G_2+E_1,$
$$
\begin{aligned}
        \E(Y\mid G,G_p)&=G_1+G_2, \\
        \Var(Y\mid G,G_p)&=\Var(E_1)=0.5,\\
        \E(Y(G)-Y(G')\mid G,G',G_p)&=(G_1+G_2)-(G_1'+G_2'),\\
        \Var(Y(G)-Y(G')\mid G,G',G_p)&=0.
\end{aligned}
$$

When $Y=G_1G_2+E_1,$
$$
\begin{aligned}
        \E(Y\mid G,G_p)&=G_1G_2, \\
        \Var(Y\mid G,G_p)&=\Var(E_1)=0.5,\\
        \E(Y(G)-Y(G')\mid G,G',G_p)&=G_1G_2-G_1'G_2',\\
        \Var(Y(G)-Y(G')\mid G,G',G_p)&=0.
\end{aligned}
$$

When $Y=G_1E_2+E_1,$
$$
\begin{aligned}
        \E(Y\mid G,G_p)&=G_1\E(E_2)=G_1,\\
        \Var(Y\mid G,G_p)&=\Var(E_1)+G_1^2\Var(E_2)=0.5+0.25G_1^2,\\
        \E(Y(G)-Y(G')\mid G,G',G_p)&=(G_1-G_1')\E(E_2) = G_1 - G_1',\\
        \Var(Y(G)-Y(G')\mid G,G',G_p)&=(G_1-G_1')^2\Var(E_2) = 0.25(G_1 - G_1')^2.
\end{aligned}
$$

When $Y=G_1+X+E_1,$
$$
\begin{aligned}
        \E(Y\mid G,G_p)&=G_1+X=G_1+\frac{G_{1f}+G_{2f}}{2},\\
        \Var(Y\mid G,G_p)&=\Var(E_1)=0.5,\\
        \E(Y(G)-Y(G')\mid G,G',G_p)&=G_1-G_1',\\
        \Var(Y(G)-Y(G')\mid G,G',G_p)&=0.
\end{aligned}
$$

When $Y=G_1X+E_1,$
$$
\begin{aligned}
        \E(Y\mid G,G_p)&=G_1X=G_1\frac{G_{1f}+G_{1m}}{2},\\
        \Var(Y\mid G,G_p)&=\Var(E_1)=0.5,\\
        \E(Y(G)-Y(G')\mid G,G',G_p)&=(G_1-G_1')X=(G_1-G_1')\frac{G_{1f}+G_{1m}}{2},\\
        \Var(Y(G)-Y(G')\mid G,G',G_p)&=0.
\end{aligned}
$$

When $Y=1\{G_1+G_2+E_1>0\}$,
$$
\begin{aligned}
        \E(Y\mid G,G_p)&=\Phi(\sqrt2 (G_1+G_2)),\\
        \Var(Y\mid G,G_p)&=\E(Y\mid G,G_p)(1-\E(Y\mid G,G_p)),\\
        \E(Y(G)-Y(G')\mid G,G',G_p)&=\Phi(\sqrt2 (G_1+G_2))-\Phi(\sqrt2 (G_1'+G_2')),\\
        \Var(Y(G)-Y(G')\mid G,G',G_p)&=\mbox{abs}\{\E(Y(G)-Y(G')\mid G,G',G_p)\}\\&\quad\cdot(1-\mbox{abs}\{\E(Y(G)-Y(G')\mid G,G',G_p)\}).
\end{aligned}
$$

When $Y=1\{G_1G_2+E_1>0\}$,
$$
\begin{aligned}
        \E(Y\mid G,G_p)&=\Phi(\sqrt2 (G_1G_2)),\\
        \Var(Y\mid G,G_p)&=\E(Y\mid G,G_p)(1-\E(Y\mid G,G_p)),\\
        \E(Y(G)-Y(G')\mid G,G',G_p)&=\Phi(\sqrt2 (G_1G_2))-\Phi(\sqrt2 (G_1'G_2')),\\
        \Var(Y(G)-Y(G')\mid G,G',G_p)&=\mbox{abs}\{\E(Y(G)-Y(G')\mid G,G',G_p)\}\\&\quad\cdot(1-\mbox{abs}\{\E(Y(G)-Y(G')\mid G,G',G_p)\}).
\end{aligned}
$$

When $Y=1\{G_1E_2+E_1>0\}$,
$$
\begin{aligned}
        \E(Y\mid G,G_p)&=\E_{E_2}\{\Phi(\sqrt2 G_1E_2)\}
    \\&=\frac{1}{16}\Phi(0)+\frac{4}{16}\Phi(\frac{\sqrt2}{2} G_1)+\frac{6}{16}\Phi(G_1)\\&\quad+\frac{4}{16}\Phi(\frac{3\sqrt2}{2} G_1)+\frac{1}{16}\Phi(2 G_1), \\
        \Var(Y\mid G,G_p)&=\E(Y\mid G,G_p)(1-\E(Y\mid G,G_p)),\\
        \E(Y(G)-Y(G')\mid G,G',G_p)&=\E_{E_2}\{\Phi(\sqrt2 G_1E_2)\}-\E_{E_2}\{\Phi(\sqrt2 G_1'E_2)\}\\
    &=\frac{4}{16}T(1/2)+\frac{6}{16}T(1)+\frac{4}{16}T(3/2)+\frac{1}{16}T(2),\\
        \Var(Y(G)-Y(G')\mid G,G',G_p)&=*,
\end{aligned}
$$
where
$$
\begin{aligned}
        *=&\Var(Y(G)-Y(G')\mid G,G',G_p)\\=& \Var\{\E(Y(G)-Y(G')\mid G,G',G_p,E_2)\mid G,G',G_p\} \\&\quad+ \E\{\Var(Y(G)-Y(G')\mid G,G',G_p,E_2)\mid G,G',G_p\}
        \\=& \Var_{E_2}(T\mid G,G',G_p) + \E_{E_2}\{|T|(1-|T|)\mid G,G',G_p\},
\end{aligned}
$$
and $T(E_2)=\Phi(\sqrt{2}G_1E_2)-\Phi(\sqrt{2}G_1'E_2).$ Here, the direct calculation shows that
$$
\begin{aligned}
    &\Var_{E_2}(T\mid G,G',G_p)
    \\&=\frac{4}{16}T^2(1/2)+\frac{6}{16}T^2(1)+\frac{4}{16}T^2(3/2)+\frac{1}{16}T^2(2)\\&\quad-\{\E(Y(G)-Y(G')\mid G,G',G_p)\}^2,
\end{aligned}
$$
and
$$
\begin{aligned}
    &\E_{E_2}\{|T|(1-|T|)\mid G,G',G_p\}
    \\&=\frac{4}{16}|T(1/2)|\{1-|T(1/2)|\}+\frac{6}{16}|T(1)|\{1-|T(1)|\}\\&\quad+\frac{4}{16}|T(3/2)|\{1-|T(3/2)|\}+\frac{1}{16}|T(2)|\{1-|T(2)|\}.
\end{aligned}
$$

When $Y=1\{G_1+X+E_1>0\}$,
$$
\begin{aligned}
        \E(Y\mid G,G_p)&=\Phi(\sqrt2 (G_1+X))=\Phi(\sqrt2 (G_1+\frac{G_{1f}+G_{1m}}{2})),\\
        \Var(Y\mid G,G_p)&=\E(Y\mid G,G_p)(1-\E(Y\mid G,G_p)),\\
        \E(Y(G)-Y(G')\mid G,G',G_p)&=\Phi(\sqrt2 (G_1+X))-\Phi(\sqrt2 (G_1'+X))\\
    &=\Phi\{\sqrt{2}(G_1+\frac{G_{1f}+G_{1m}}{2})\}-\Phi\{\sqrt{2}(G_1'+\frac{G_{1f}+G_{1m}}{2})\},\\
        \Var(Y(G)-Y(G')\mid G,G',G_p)&=\mbox{abs}\{\E(Y(G)-Y(G')\mid G,G',G_p)\}\\&\quad\cdot(1-\mbox{abs}\{\E(Y(G)-Y(G')\mid G,G',G_p)\}).
\end{aligned}
$$

When $Y=1\{G_1X+E_1>0\}$,
$$
\begin{aligned}
        \E(Y\mid G,G_p)&=\Phi(\sqrt2 G_1X)=\Phi(\sqrt2 G_1\frac{G_{1f}+G_{1m}}{2}),\\
        \Var(Y\mid G,G_p)&=\E(Y\mid G,G_p)(1-\E(Y\mid G,G_p)),\\
        \E(Y(G)-Y(G')\mid G,G',G_p)&=\Phi(\sqrt2 G_1X)-\Phi(\sqrt2 G_1'X)\\
    &=\Phi(\sqrt{2}G_1\frac{G_{1f}+G_{1m}}{2})-\Phi(\sqrt{2}G_1'\frac{G_{1f}+G_{1m}}{2}),\\
        \Var(Y(G)-Y(G')\mid G,G',G_p)&=\mbox{abs}\{\E(Y(G)-Y(G')\mid G,G',G_p)\}\\&\quad\cdot(1-\mbox{abs}\{\E(Y(G)-Y(G')\mid G,G',G_p)\}).
\end{aligned}
$$


\begin{thebibliography}{}

\bibitem[\protect\citeauthoryear{Bareinboim, Correa, Ibeling, and
  Icard}{Bareinboim et~al.}{2022}]{bareinboimPearlHierarchyFoundations2022}
Bareinboim, E., J.~D. Correa, D.~Ibeling, and T.~Icard (2022, March).
\newblock On {{Pearl}}'s {{Hierarchy}} and the {{Foundations}} of {{Causal
  Inference}}.
\newblock In {\em Probabilistic and {{Causal Inference}}: {{The Works}} of
  {{Judea Pearl}}\/} (1 ed.), Volume~36, pp.\  507--556. New York, NY, USA:
  Association for Computing Machinery.

\bibitem[\protect\citeauthoryear{Barry, Walker, Cheesman, Davey~Smith, Morris,
  and Davies}{Barry et~al.}{2023}]{barry_how_2023}
Barry, C.-J.~S., V.~M. Walker, R.~Cheesman, G.~Davey~Smith, T.~T. Morris, and
  N.~M. Davies (2023, April).
\newblock How to estimate heritability: a guide for genetic epidemiologists.
\newblock {\em International Journal of Epidemiology\/}~{\em 52\/}(2),
  624--632.

\bibitem[\protect\citeauthoryear{Boyle, Li, and Pritchard}{Boyle
  et~al.}{2017}]{boyle2017expanded}
Boyle, E.~A., Y.~I. Li, and J.~K. Pritchard (2017).
\newblock An expanded view of complex traits: from polygenic to omnigenic.
\newblock {\em Cell\/}~{\em 169\/}(7), 1177--1186.

\bibitem[\protect\citeauthoryear{Brumpton, Sanderson, Heilbron, Hartwig,
  Harrison, Vie, Cho, Howe, Hughes, Boomsma, et~al.}{Brumpton
  et~al.}{2020}]{brumpton2020avoiding}
Brumpton, B., E.~Sanderson, K.~Heilbron, F.~P. Hartwig, S.~Harrison, G.~{\AA}.
  Vie, Y.~Cho, L.~D. Howe, A.~Hughes, D.~I. Boomsma, et~al. (2020).
\newblock Avoiding dynastic, assortative mating, and population stratification
  biases in mendelian randomization through within-family analyses.
\newblock {\em Nature Communications\/}~{\em 11\/}(1), 3519.

\bibitem[\protect\citeauthoryear{Bulik-Sullivan, Loh, Finucane, Ripke, Yang,
  {Schizophrenia Working Group of the Psychiatric Genomics Consortium},
  Patterson, Daly, Price, and Neale}{Bulik-Sullivan
  et~al.}{2015}]{bulik-sullivan_ld_2015}
Bulik-Sullivan, B.~K., P.-R. Loh, H.~K. Finucane, S.~Ripke, J.~Yang,
  {Schizophrenia Working Group of the Psychiatric Genomics Consortium},
  N.~Patterson, M.~J. Daly, A.~L. Price, and B.~M. Neale (2015, March).
\newblock {LD} {Score} regression distinguishes confounding from polygenicity
  in genome-wide association studies.
\newblock {\em Nature Genetics\/}~{\em 47\/}(3), 291--295.

\bibitem[\protect\citeauthoryear{Falconer}{Falconer}{1996}]{falconer1996introduction}
Falconer, D.~S. (1996).
\newblock {\em Introduction to Quantitative Genetics}.
\newblock Pearson Education India.

\bibitem[\protect\citeauthoryear{Fr{\'e}chet}{Fr{\'e}chet}{1951}]{frechet1951tableaux}
Fr{\'e}chet, M. (1951).
\newblock Sur les tableaux de corr{\'e}lation dont les marges sont donn{\'e}es.
\newblock {\em Annals University Lyon: Series A\/}~{\em 14}, 53--77.

\bibitem[\protect\citeauthoryear{Gao and Zhao}{Gao and
  Zhao}{2024}]{gaoCounterfactualExplainabilityBlackbox2024}
Gao, Z. and Q.~Zhao (2024, November).
\newblock Counterfactual explainability of black-box prediction models.

\bibitem[\protect\citeauthoryear{Hartl, Clark, and Clark}{Hartl
  et~al.}{1997}]{hartl1997principles}
Hartl, D.~L., A.~G. Clark, and A.~G. Clark (1997).
\newblock {\em Principles of Population Genetics}, Volume 116.
\newblock Sunderland, MA: Sinauer Associates.

\bibitem[\protect\citeauthoryear{Hernan and Robins}{Hernan and
  Robins}{2023}]{hernanCausalInferenceWhat2023}
Hernan, M.~A. and J.~M. Robins (2023, December).
\newblock {\em Causal {{Inference}}: {{What If}}}.
\newblock CRC Press.

\bibitem[\protect\citeauthoryear{Hoeffding}{Hoeffding}{1940}]{hoeffding1940masstabinvariante}
Hoeffding, W. (1940).
\newblock Masstabinvariante korrelationstheorie.
\newblock {\em Schriften des Mathematischen Instituts und Instituts fur
  Angewandte Mathematik der Universitat Berlin\/}~{\em 5}, 181--233.

\bibitem[\protect\citeauthoryear{Hoeffding}{Hoeffding}{1948}]{hoeffding1948}
Hoeffding, W. (1948).
\newblock {A Class of Statistics with Asymptotically Normal Distribution}.
\newblock {\em The Annals of Mathematical Statistics\/}~{\em 19\/}(3), 293 --
  325.

\bibitem[\protect\citeauthoryear{Holland, Nyquist, and
  Cervantes-Martínez}{Holland et~al.}{2002}]{holland2002}
Holland, J.~B., W.~E. Nyquist, and C.~T. Cervantes-Martínez (2002).
\newblock {\em Estimating and Interpreting Heritability for Plant Breeding: An
  Update}, Chapter~2, pp.\  9--112.
\newblock John Wiley \& Sons, Ltd.

\bibitem[\protect\citeauthoryear{Holland}{Holland}{1986}]{holland1986statistics}
Holland, P.~W. (1986).
\newblock Statistics and causal inference.
\newblock {\em Journal of the American Statistical Association\/}~{\em
  81\/}(396), 945--960.

\bibitem[\protect\citeauthoryear{Hooker}{Hooker}{2007}]{hookerGeneralizedFunctionalANOVA2007}
Hooker, G. (2007).
\newblock Generalized {{Functional ANOVA Diagnostics}} for {{High-Dimensional
  Functions}} of {{Dependent Variables}}.
\newblock {\em Journal of Computational and Graphical Statistics\/}~{\em
  16\/}(3), 709--732.

\bibitem[\protect\citeauthoryear{Imbens and Rubin}{Imbens and
  Rubin}{2015}]{imbens2015causal}
Imbens, G.~W. and D.~B. Rubin (2015).
\newblock {\em Causal Inference in Statistics, Social, and Biomedical
  Sciences}.
\newblock New York: Cambridge University Press.

\bibitem[\protect\citeauthoryear{Jacquard}{Jacquard}{1983}]{jacquard1983heritability}
Jacquard, A. (1983).
\newblock Heritability: one word, three concepts.
\newblock {\em Biometrics\/}~{\em 39}, 465--477.

\bibitem[\protect\citeauthoryear{Jinks and Fulker}{Jinks and
  Fulker}{1970}]{jinks1970comparison}
Jinks, J.~L. and D.~W. Fulker (1970).
\newblock Comparison of the biometrical genetical, mava, and classical
  approaches to the analysis of the human behavior.
\newblock {\em Psychological bulletin\/}~{\em 73\/}(5), 311--349.

\bibitem[\protect\citeauthoryear{Kohler, Behrman, and Schnittker}{Kohler
  et~al.}{2011}]{kohler2011social}
Kohler, H.-P., J.~R. Behrman, and J.~Schnittker (2011).
\newblock Social science methods for twins data: Integrating causality,
  endowments, and heritability.
\newblock {\em Biodemography and Social Biology\/}~{\em 57\/}(1), 88--141.

\bibitem[\protect\citeauthoryear{Kong, Thorleifsson, Frigge, Vilhjalmsson,
  Young, Thorgeirsson, Benonisdottir, Oddsson, Halldorsson, Masson,
  et~al.}{Kong et~al.}{2018}]{kong2018nature}
Kong, A., G.~Thorleifsson, M.~L. Frigge, B.~J. Vilhjalmsson, A.~I. Young, T.~E.
  Thorgeirsson, S.~Benonisdottir, A.~Oddsson, B.~V. Halldorsson, G.~Masson,
  et~al. (2018).
\newblock The nature of nurture: Effects of parental genotypes.
\newblock {\em Science\/}~{\em 359\/}(6374), 424--428.

\bibitem[\protect\citeauthoryear{Manolio, Collins, Cox, Goldstein, Hindorff,
  Hunter, McCarthy, Ramos, Cardon, Chakravarti, et~al.}{Manolio
  et~al.}{2009}]{manolio2009finding}
Manolio, T.~A., F.~S. Collins, N.~J. Cox, D.~B. Goldstein, L.~A. Hindorff,
  D.~J. Hunter, M.~I. McCarthy, E.~M. Ramos, L.~R. Cardon, A.~Chakravarti,
  et~al. (2009).
\newblock Finding the missing heritability of complex diseases.
\newblock {\em Nature\/}~{\em 461\/}(7265), 747--753.

\bibitem[\protect\citeauthoryear{Neale and Cardon}{Neale and
  Cardon}{1992}]{neale1992methodology}
Neale, M.~C. and L.~R. Cardon (1992).
\newblock {\em Methodology for Genetic Studies of Twins and Families\/} (1
  ed.), Volume~67 of {\em NATO Science Series D:}.
\newblock Springer Dordrecht.
\newblock Part of the NATO Science Series D: (ASID, volume 67); Topics: Human
  Genetics, Methodology of the Social Sciences.

\bibitem[\protect\citeauthoryear{Neale, Hunter, Pritikin, Zahery, Brick,
  Kirkpatrick, Estabrook, Bates, Maes, and Boker}{Neale
  et~al.}{2016}]{neale2016openmx}
Neale, M.~C., M.~D. Hunter, J.~N. Pritikin, M.~Zahery, T.~R. Brick, R.~M.
  Kirkpatrick, R.~Estabrook, T.~C. Bates, H.~H. Maes, and S.~M. Boker (2016).
\newblock Open{M}x 2.0: {E}xtended structural equation and statistical
  modeling.
\newblock {\em Psychometrika\/}~{\em 81\/}(2), 535--549.

\bibitem[\protect\citeauthoryear{Pearl}{Pearl}{2009}]{pearl2009}
Pearl, J. (2009).
\newblock {\em Causality: {{Models}}, Reasoning, and Inference\/} (2 ed.).
\newblock New York: Cambridge University Press.

\bibitem[\protect\citeauthoryear{Pearl and Mackenzie}{Pearl and
  Mackenzie}{2018}]{pearl2018book}
Pearl, J. and D.~Mackenzie (2018).
\newblock {\em The Book of Why: The New Science of Cause and Effect}.
\newblock Basic books.

\bibitem[\protect\citeauthoryear{Piepho and Mohring}{Piepho and
  Mohring}{2007}]{piepho2007computing}
Piepho, H.-P. and J.~Mohring (2007).
\newblock Computing heritability and selection response from unbalanced plant
  breeding trials.
\newblock {\em Genetics\/}~{\em 177\/}(3), 1881--1888.

\bibitem[\protect\citeauthoryear{Richardson and Robins}{Richardson and
  Robins}{2013}]{richardson2013single}
Richardson, T.~S. and J.~M. Robins (2013).
\newblock Single world intervention graphs (swigs): A unification of the
  counterfactual and graphical approaches to causality.
\newblock {\em Center for the Statistics and the Social Sciences, University of
  Washington Series. Working Paper\/}~{\em 128\/}(30), 2013.

\bibitem[\protect\citeauthoryear{Rosenbaum and Rubin}{Rosenbaum and
  Rubin}{1983}]{rosenbaum1983central}
Rosenbaum, P.~R. and D.~B. Rubin (1983).
\newblock The central role of the propensity score in observational studies for
  causal effects.
\newblock {\em Biometrika\/}~{\em 70\/}(1), 41--55.

\bibitem[\protect\citeauthoryear{Rubin}{Rubin}{1980}]{rubin1980randomization}
Rubin, D.~B. (1980).
\newblock Randomization analysis of experimental data: The fisher randomization
  test comment.
\newblock {\em Journal of the American Statistical Association\/}~{\em
  75\/}(371), 591--593.

\bibitem[\protect\citeauthoryear{Schmidt, Hartung, Bennewitz, and
  Piepho}{Schmidt et~al.}{2019}]{schmidt2019heritability}
Schmidt, P., J.~Hartung, J.~Bennewitz, and H.-P. Piepho (2019).
\newblock Heritability in plant breeding on a genotype-difference basis.
\newblock {\em Genetics\/}~{\em 212\/}(4), 991--1008.

\bibitem[\protect\citeauthoryear{Tudball, Smith, and Zhao}{Tudball
  et~al.}{2022}]{tudball2022almost}
Tudball, M.~J., G.~D. Smith, and Q.~Zhao (2022).
\newblock Almost exact mendelian randomization.
\newblock arXiv:2208.14035.

\bibitem[\protect\citeauthoryear{Veller and Coop}{Veller and
  Coop}{2024}]{veller2024interpreting}
Veller, C. and G.~M. Coop (2024).
\newblock Interpreting population-and family-based genome-wide association
  studies in the presence of confounding.
\newblock {\em PLoS Biology\/}~{\em 22\/}(4), e3002511.

\bibitem[\protect\citeauthoryear{Visscher, Medland, Ferreira, Morley, Zhu,
  Cornes, Montgomery, and Martin}{Visscher
  et~al.}{2006}]{visscher2006assumption}
Visscher, P.~M., S.~E. Medland, M.~A.~R. Ferreira, K.~I. Morley, G.~Zhu, B.~K.
  Cornes, G.~W. Montgomery, and N.~G. Martin (2006).
\newblock Assumption-free estimation of heritability from genome-wide
  identity-by-descent sharing between full siblings.
\newblock {\em PLoS Genetics\/}~{\em 2\/}(3), e41.

\bibitem[\protect\citeauthoryear{Wang}{Wang}{2015}]{wang2015current}
Wang, R. (2015).
\newblock Current open questions in complete mixability.
\newblock {\em Probability Surveys\/}~{\em 12}, 13--32.

\bibitem[\protect\citeauthoryear{Wright}{Wright}{1920}]{wright1920relative}
Wright, S. (1920).
\newblock The relative importance of heredity and environment in determining
  the piebald pattern of guinea-pigs.
\newblock {\em Proceedings of the National Academy of Sciences\/}~{\em 6\/}(6),
  320--332.

\bibitem[\protect\citeauthoryear{Xiao and Yao}{Xiao and
  Yao}{2020}]{xiao2020note}
Xiao, Y. and J.~Yao (2020).
\newblock A note on joint mix random vectors.
\newblock {\em Communications in Statistics-Theory and Methods\/}~{\em
  49\/}(12), 3063--3072.

\bibitem[\protect\citeauthoryear{Yang, Benyamin, McEvoy, Gordon, Henders,
  Nyholt, Madden, Heath, Martin, Montgomery, Goddard, and Visscher}{Yang
  et~al.}{2010}]{yang_common_2010}
Yang, J., B.~Benyamin, B.~P. McEvoy, S.~Gordon, A.~K. Henders, D.~R. Nyholt,
  P.~A. Madden, A.~C. Heath, N.~G. Martin, G.~W. Montgomery, M.~E. Goddard, and
  P.~M. Visscher (2010, July).
\newblock Common {SNPs} explain a large proportion of the heritability for
  human height.
\newblock {\em Nature Genetics\/}~{\em 42\/}(7), 565--569.
\newblock Publisher: Nature Publishing Group.

\bibitem[\protect\citeauthoryear{Yang, Zeng, Goddard, Wray, and Visscher}{Yang
  et~al.}{2017}]{yang_concepts_2017}
Yang, J., J.~Zeng, M.~E. Goddard, N.~R. Wray, and P.~M. Visscher (2017,
  September).
\newblock Concepts, estimation and interpretation of {SNP}-based heritability.
\newblock {\em Nature Genetics\/}~{\em 49\/}(9), 1304--1310.
\newblock Publisher: Nature Publishing Group.

\bibitem[\protect\citeauthoryear{Young}{Young}{2019}]{young2019solving}
Young, A.~I. (2019).
\newblock Solving the missing heritability problem.
\newblock {\em PLoS Genetics\/}~{\em 15\/}(6), e1008222.

\bibitem[\protect\citeauthoryear{Young, Frigge, Gudbjartsson, Thorleifsson,
  Bjornsdottir, Sulem, Masson, Thorsteinsdottir, Stefansson, and Kong}{Young
  et~al.}{2018}]{young2018relatedness}
Young, A.~I., M.~L. Frigge, D.~F. Gudbjartsson, G.~Thorleifsson,
  G.~Bjornsdottir, P.~Sulem, G.~Masson, U.~Thorsteinsdottir, K.~Stefansson, and
  A.~Kong (2018).
\newblock Relatedness disequilibrium regression estimates heritability without
  environmental bias.
\newblock {\em Nature Genetics\/}~{\em 50\/}(9), 1304--1310.

\bibitem[\protect\citeauthoryear{Young, Nehzati, Benonisdottir, Okbay,
  Jayashankar, Lee, Cesarini, Benjamin, Turley, and Kong}{Young
  et~al.}{2022}]{young2022mendelian}
Young, A.~I., S.~M. Nehzati, S.~Benonisdottir, A.~Okbay, H.~Jayashankar,
  C.~Lee, D.~Cesarini, D.~J. Benjamin, P.~Turley, and A.~Kong (2022).
\newblock Mendelian imputation of parental genotypes improves estimates of
  direct genetic effects.
\newblock {\em Nature Genetics\/}~{\em 54\/}(6), 897--905.

\bibitem[\protect\citeauthoryear{Young}{Young}{2024}]{young2024genome}
Young, A.~S. (2024).
\newblock Genome-wide association studies have problems due to confounding: Are
  family-based designs the answer?
\newblock {\em PLoS Biology\/}~{\em 22\/}(4), e3002568.

\bibitem[\protect\citeauthoryear{Zaitlen and Kraft}{Zaitlen and
  Kraft}{2012}]{zaitlen2012heritability}
Zaitlen, N. and P.~Kraft (2012).
\newblock Heritability in the genome-wide association era.
\newblock {\em Human Genetics\/}~{\em 131}, 1655--1664.

\bibitem[\protect\citeauthoryear{Zuk, Hechter, Sunyaev, and Lander}{Zuk
  et~al.}{2012}]{zuk2012mystery}
Zuk, O., E.~Hechter, S.~R. Sunyaev, and E.~S. Lander (2012).
\newblock The mystery of missing heritability: Genetic interactions create
  phantom heritability.
\newblock {\em Proceedings of the National Academy of Sciences\/}~{\em
  109\/}(4), 1193--1198.

\end{thebibliography}
\end{document}